\theoremstyle{plain}\newtheorem{theorem}{Theorem}[section]
\theoremstyle{plain}
\theoremstyle{plain}\newtheorem{proposition}[theorem]{Proposition}
\theoremstyle{plain}\newtheorem{lemma}[theorem]{Lemma}
\theoremstyle{definition}
\theoremstyle{plain}
\theoremstyle{remark}\newtheorem{remark}{Remark} 
\theoremstyle{definition}\newtheorem{def:and:lemma}[theorem]{Definition and Lemma}
\newcommand{\N}{\mathbb{N}}
\newcommand{\R}{\mathbb{R}}
\newcommand{\cF}{\mathcal{F}}
\newcommand{\cN}{\mathcal{N}}
\newcommand{\cH}{\mathcal{H}}
\newcommand{\fH}{\mathfrak{H}}
\newcommand{\eps}{\varepsilon}
\renewcommand{\phi}{\varphi}
\renewcommand{\Re}{\mathrm{Re}}
\newcommand{\aux}{\mathrm{aux}}
\newcommand{\Bog}{\mathrm{Bog}}
\newcommand{\BF}{\mathrm{BF}}
\newcommand{\eed}{\color{black}}
\newcommand{\bra}{\langle}
\newcommand{\ket}{\rangle}
\title{Dynamics of a tracer particle interacting with excitations of a Bose-Einstein condensate}
\date{\today}
\author{
Jonas Lampart
\footnote{ Ludwig-Maximilians-Universit\"at, Mathematisches Institut, Theresienstr.\ 39, {80333} M\"unchen, Germany.}
\footnote{CNRS \& Laboratoire Interdisciplinaire Carnot de Bourgogne (UMR 6303), 9. Av. A. Savary, 21078 Dijon Cedex. E-mail: {\tt lampart@math.cnrs.fr}}
,
Peter Pickl\footnotemark[1]
\footnote{Eberhard-Karls-Universität, Mathematisches Institut, Auf der Morgenstelle 10, 72076 Tübingen, Germany. E-Mail: {\tt p.pickl@uni-tuebingen.de }}
}
\begin{document}

\maketitle
.
\begin{abstract}
We consider the quantum dynamics of a large number $N$ of interacting bosons coupled a tracer particle, i.e.~a particle of another kind, on a torus. We assume that in the initial state the bosons essentially form a homogeneous Bose-Einstein condensate, with some excitations. With an appropriate mean-field scaling of the interactions, we prove that the effective dynamics for $N\to \infty$ is generated by the Bogoliubov-Fröhlich Hamiltonian, which couples the tracer particle linearly to the excitation field.
\end{abstract}

\section{Introduction}

The reaction of a many-particle system to small perturbations -- for example the introduction of a different particle -- carries detailed information about its global properties.
An interesting example is a Bose-Einstein condensate with an impurity particle. Apart from the obvious motivation of understanding this state of matter, Bose-Einstein condensates in the laboratory have the advantage that many of their parameters are controllable, which allows for comparison of theory and experiment over a large range paramter values. Systems of Bose-Einstein condensates with few impurities can now be realised using cold atoms and promise numerous applications~\cite{Zipkes2010}. Theoretical approches to their description are discussed in~\cite{grusdt2016}.

% With ongoing development in the experimental technology of Bose Einstein Condensation, it is nowadays possible to track the evolution of single impurities interacting with a Bose Einstein Condensate of small temperature 
%~\cite{Zipkes2010}. The features of such experiments are very promising, not only can the impurity be used to probe the condensate, it can also  locally manipulate the gas. According to Zipkes et. al. ``Numerous applications of
% this hybrid system have been foreseen, including sympathetic cooling, the nucleation of localized density fluctuations
% in a Bose gas, scanning probe microscopy
% with previously unattainable spatial resolution, and
% hybrid atom-ion quantum processors''\cite{Zipkes2010}.

In this manuscript we will perform a mathematical analysis of the dynamics of such an impurity entering a Bose gas at zero temperature. In keeping with the convention of~\cite{Froehlich2011, deckert2016}, and since one motivation is to use the impurity to probe the condensate, we will call it ``tracer'' particle in what follows.

We will show that, for a certain choice of parameters and for a Bose-Einstein condensate of constant density, the tracer particle will effectively interact with the excitations in the gas, which will be of finite number and are governed by the dynamics proposed by Bogoliubov in his seminal work~\cite{bogoliubov1947}. Thus the system is -- for a large number of bosons -- in good approximation described by the Bogoliubov-Fröhlich Hamiltonian, given in \eqref{eq:H-BF} below.
Our result is in agreement with recent findings by Mysliwy and Seiringer, who analyzed the spectral properties of the same system~\cite{Mysliwy2020}.

In order to get an interaction which is governed by the excitations in the condensate we have to assume that the condensate state has constant density -- at least in the interaction region of the tracer particle. Otherwise, the effect coming from particles in the condensate state -- which is at zero temperature the vast majority -- would be dominant. Assuming constant density, this leading contribution of the interaction will also be a constant which simply leads to a global change in phase for the dynamics, and thus does not influence the motion of the tracer.
To simplify the technicalities we assume that our system lives on a three dimensional torus of fixed length. However, we expect that this can be generalized. This generalization has to ensure, of course, that density stays constant in the range of the tracer particle for the full time scale of our observation, which amounts to considering gases of large volume. 
  
The configuration space of one particle is the $d$-dimensional torus of side-length one, $T^d=\R^d/\mathbb{Z}^d$. The wave function of a system of $N$ bosons and one tracer particle in $T^d$ is then an element of
\begin{equation}
 \cH_N=L^2(T^d) \otimes L^2(T^d)^{\otimes_s N},
\end{equation}
where $\otimes_s$ denotes the symmetric tensor product.

The dynamics of the system is described by the Schrödinger equation on $\cH_N$
\begin{equation}\label{eq:schroe}
\left\{ 
\begin{aligned}
i\frac{d}{dt}\Psi(t)&=H_N \Psi(t)\\
\Psi(0)&=\Psi_N,
\end{aligned}
\right.
\end{equation}
with Hamiltonian
\begin{equation}\label{eq:hamiltonian-gen} 
H_N=-\frac1{2m} \Delta_x-\sum_{j=1}^N\Delta_{y_j} + g_{B} \sum_{j\neq k=1}^N V(y_j-y_k)+g_{I}\sum_{j=1}^N W(x-y_j) \;,
\end{equation}
where $x$ denotes the position of the tracer and $y_1, \dots, y_n$ those of the bosons.  

%Without loss of generality we can assume that $\int V(x)dx=0=\int W(x)dx$ (up to a time-dependent global phase).

As an $N$-boson initial state $\Psi_N$ one can think of an $N$-fold product of the normalized constant on the torus times an $L^2$-function of $x\in T^d$ describing the initial state of the tracer. More generally, this could be modified by adding some ``excitations'', i.e. factors orthogonal to the constant function of $y\in T^d$, as long as their number stays finite as $N\to \infty$ (see~\eqref{eq:initial} below for the precise condition).

The coupling constants $g_B$ of the boson interaction and $g_I$ of the interaction with the impurity will depend on the number $N$ of bosons. The precise form of the dependence has to be chosen appropriately to get the desired effect, i.e. a system where the interaction with excitations is of order one.

\subsection{Mean-field equations and corrections}

The effective description of many-body systems is an old and yet vivid research area of mathematical physics. Hepp~\cite{hepp} and later Ginibre and Velo~\cite{ginibre1979,ginibre1979_2} as well as Spohn~\cite{spohn1980} proved that the dynamics of a many-body Bose gas at zero temperature is, in the weak coupling limit, described by the Hartree equation in the sense that the one-particle marginal density of the $N$-body wave function converges to a pure state given by the solution of the Hartree equation as $N$ tends to infinity.
The latter can be derived by replacing the interaction of the many-body system by the respective mean-field, effectively turning the $N$-body description into a one-body dynamics. Subsequently, many improvements have been found, for example generalizations to a larger class of interactions, and different notions of convergence (see~\cite{lewin2015_3} for a review). The dilute (Gross-Pitaevskii) limit, where interactions are of order one but rare, is a variation of this theme that has also been studied extensively (see~\cite{benedikter_lec}).

More recently, there have also been many works on the fluctuations around the mean-field behavior.
Grillakis, Margetis and Machedon proved the validity of Bogoliubov theory in a weak-coupling situation with mathematical rigor~\cite{grillakis2010,grillakis2011,grillakis2013}. They showed that the number of excitations is finite in this case and that the convergence of the $N$-body wave function towards the respective solution of the Bogoliubov time evolution holds in $L^2$-norm. This implies also that the errors are small compared to the effects of Bogoliubov's theory.  
Their results have been generalized by many other authors in different situations
\cite{brennecke2017_2,grillakis2017,kuz2017,lewin2015,mitrouskas2016,nam2011,nam2015,nam2017,nam2017_2,petrat2017}.
Corrections beyond the Bogoliubov dynamics were studied in~\cite{paul2019,corr}, including explicit formulas for higher contributions~\cite{QF}.

 In  parallel to these results concerning the dynamics, there are also numerous results  related to the static question, which show closeness of the $N$-body ground state to the minimizer of an effective functional (see~\cite{lewin2015_3}).  Bogoliubov theory has also been justified in this context~\cite{seiringer2011,grech2013, derezinski2014,seiringer2014, lewin2015_2, pizzo2015,pizzo2015_2,pizzo2015_3, boccato2017}, and allows for a resolution of the  spectrum close to the ground state~\cite{derezinski2014,grech2013,seiringer2011, lewin2015_2, boccato2018}.
Recently, higher order corrections~\cite{spectrum} and the interaction with a tracer particle~\cite{Mysliwy2020} have also been considered in the static context.

\subsubsection{Heuristics of the mean-field scaling}

In this article we consider the mean-field scaling for the boson interactions,  i.e. the situation where $g_B$ is of order $N^{-1}$. With a weak coupling of this form,  we can expect the interaction energy of the $N$ bosons, which is roughly of order $N^2 g_B$, to be of the same order as their kinetic energy. 
% Verschieben?
In models for dilute gases this is complemented by a scaling of the potential $V_N$ such that $V_N$ tends to $\delta$ weakly as $N\to \infty$. We will not consider this situation here. We only remark that, in addition to the difficulties in the derivation of the Bogoliubov Hamiltonian, it will be necessary to take into account the renormalization of the limiting model (see Remark~\ref{rem:dilute}).

As explained above, we are interested in a tracer particle entering the Bose gas. 
%A similar system has already been considered in~\cite{deckert2016, chen2019}, but we will interested in a different regime. 
We want to consider a scaling, respectively coupling, which is such that the interaction of the tracer is with the excitations from the ground state. Since the vast majority of the bosons will be part of the condensate, the leading contribution to the energy of the tracer will generally come from the condensate and this should be of order $g_I N_0 \sim g_IN$, where $N_0$ is the number of bosons in the condensate.  However, since the condensate density is assumed to be constant, this contribution will be trivial, i.e.~a constant.
The next order would be the back-reaction on the tracer of small changes to the condensate, i.e. excitations. Since this is a second order effect one can expect it to be of order $g_I^2 N$. We thus assume that $g_I$ is of order $N^{-1/2}$, which will also ensure that the number of excitations coming from the interaction with the tracer is of the same order as the number of excitations we have without tracer particle (i.e. of order one).
This is exactly the same scaling as in~\cite{Mysliwy2020}.

The articles~\cite{deckert2016, chen2019} considered the different situation where the interaction of the tracer with the Bose gas is such that the density of the gas is changed significantly. To achieve this, the coupling constant between tracer and gas is chosen to be one. Thus the back-reaction of the gas particles and the tracer  will be of order $N$ (respectively of the order of the density if the gas has large volume). In order to have a nice limiting regime the mass of the tracer-particle in~\cite{deckert2016, chen2019} was assumed to be of order $N$, resulting in an acceleration  of order one.

 \subsubsection{Heuristics of Bogoliubov theory}

 We now briefly explain the heuristic derivation of the Bogoliubov-Fröhlich Hamiltonian (see also~\cite[Sect.3.3]{grusdt2016}), which we will subsequently make rigorous.
 Let $\phi_0\equiv 1$ denote the constant function on the torus, which we think of as the condensate state, and extend this to a complete orthogonal system $(\phi_j)_{j\in \N_0}$ of the one-particle Hilbert space $L^2(T^d)$ (of course, one can take the $\phi_j$ to be plane waves, but for our purposes this is not necessary and we prefer to retain this generality).
  Let  $a_j^*:=a^*(\phi_j)$ and $a_j:=a(\phi_j)$ be the creation, respectively annihilation, operators for the state $\phi_j$ (see~\cite[Sect.X.7]{ReSi2} for definitions of these, and related, objects).

The boson-Hamiltonian with coupling $g_B=N^{-1}$ can then be written as

  \begin{equation}\label{eq:Hsecond}
  \begin{aligned}
  H_B=& d\Gamma(-\Delta) + N^{-1}\sum_{j, k,\ell,m=0}^\infty V_{jklm} a_j^*a_k^*a_\ell a_m
  \end{aligned}
  \end{equation}
  with
  \begin{equation}\label{eq:V proj}
  V_{jk\ell m}=\int_{T^d} \int_{T^d} \bar\phi_j(y)\bar\phi_k(y')V(y-y')\phi_\ell(y')\phi_m(y) dy'dy\;.
  \end{equation}
 
% First note that $V_{j00k}=0=V_{0jk0}$ for $j,k\in\mathbb{N}_0$, the only exception being $a=b=0$.
% This $a=b=0$ case gives a global phase to the system.
% It follows in particular that all terms with more than two of the indices $j,k,l,m$ being zero vanish. 
% Note also that  and $V_{j0k0}=V_{0j0k}$. Thus, from the terms where exactly two  of the indices $j,k,l,m$ are  zero only the following remain:$$2N^{-1}\sum_{j, k=1}^\infty V_{j0k0} a_j^*a_0^*a_ka_0
% + N^{-1}\sum_{j, k=1}^\infty  V_{jk00}  a_j^*a_k^*a_0a_0
% + N^{-1}\sum_{j, k=1}^\infty  V_{00jk}  a_0^*a_0^*a_ja_k\;.$$ 
%   
Bogoliubov's idea~\cite{bogoliubov1947} is that, since the vast majority of particles will be in the condensate state, the operators $a_0^*$ and $a_0$ will give contributions of order $\sqrt{N}$, whereas $a_j^*$ and $a_j$ with $j\geq1$ give contributions of order one.
Consequently, all terms with less than two of the indices $j,k,\ell,m$ equal to zero are small.
Replacing then $a_0^*$ and $a_0$ by $\sqrt{N}$ in the remaining ones and assuming that $V$ is even and $\int_{T^d} V=0$ (which corresponds to a constant shift in energy and eliminates all terms where more than two of the indices  $j$, $k$, $\ell$ and $m$ are equal to zero -- see also Lemma~\ref{lem:V trafo} below), one arrives at the Bogoliubov Hamiltonian
\begin{equation}\label{eq:HBog}
\begin{aligned}
H^\Bog=& d\Gamma(-\Delta) + 2\sum_{j, k=1}^\infty V_{j0k0} a_j^*a_k
+ \sum_{j, k=1}^\infty  V_{jk00}  a_j^*a_k^*
+ \sum_{j, k=1}^\infty  V_{00jk}  a_ja_k\;.
\end{aligned}
\end{equation}

If we apply the same reasoning to the interaction of the bosons with the tracer,
we write 
\begin{equation}\label{eq:W decomp}
\sum_{n=1}^N W(x-y_n) =  \sum_{j,k=0}^\infty a_j^*a_k (W_x)_{j,k} ,
\end{equation}
with $W_x(y)=W(x-y)$, $(W_x)_{j,k}= \int\bar\phi_j(y)\phi_k(y)W_x(y)d y$.
Using that $(W_x)_{0,0}(x) = \int W=0$, and then
setting $g_I=N^{-1/2}$, dropping the terms with $j\neq 0\neq k$ and replacing $a_0^*$ and $a_0$ by $\sqrt{N}$ as above, we arrive at (see also Lemma~\ref{lem:W trafo} below)
\begin{equation}
 \sum_{j=1}^\infty \Big( a_j (W_x)_{0,j} + a_j^* (W_x)_{j,0} \Big)=  a(W_x) +a^*(W_x).
\end{equation}

Adding this interaction as well as the kinetic energy of the tracer to the Bogoliubov Hamiltonian, we obtain the Bogoliubov-Fröhlich Hamiltonian
\begin{equation}\label{eq:H-BF}
H^\BF = -\frac{1}{2m} \Delta_x  + H^\Bog  + a^*(W_x) + a(W_x).
\end{equation}

\begin{remark}
 An important feature of the original Bogoliubov Hamiltonian is that it is a quadratic expression in the operators $a,a^*$. 
There is a unitary Bogoliubov transformation  $U_\Bog$ (that amounts to changing the creation/annihilation operators) such that

\begin{equation}
U_\Bog H^\Bog U_\Bog^* = d \Gamma (E) + E_0
\end{equation}
for a one-particle operator $E$ and a constant $E_0$.
The excitations  can thus be described by a non-interacting theory and, in the case of the torus, the energy levels can be computed explicitly (see e.g.~\cite{Mysliwy2020}).

We emphasize that the Bogoliubov-Fröhlich Hamiltonian does not share this feature. Even though the expression~\eqref{eq:H-BF} seems to be quadratic, the interaction depends on the position $x$ of the tracer particle, which also appears in the Laplacian. Performing an $x$-dependent Bogoliubov transformation will thus not yield a simple result (alternatively, introducing a field for the impurities and viewing $H^\BF$ as the restriction to the one-impurity space, one sees that the interaction is in fact cubic in the creation and annihilation operators, cf.~\cite[Sect.3.2]{grusdt2016}).
One can, of course, transform $H^\BF$ using $U_\Bog$, which yields an operator of the form
\begin{equation}\label{eq:HBF trafo}
 (1\otimes U_\Bog) H^\BF (1\otimes U_\Bog^*) = -\frac{1}{2m} \Delta_x + d\Gamma(E) + a^*(\tilde W_x) + a(\tilde W_x) + E_0,
\end{equation}
with a suitably transformed interaction $\tilde W$ (see~\cite[Sect.3.3]{grusdt2016} and~\cite[Eq.(1.10)]{Mysliwy2020} for explicit expressions).
% Explicitly, $E$ is a Fourier-multiplier by the sequence
% \begin{equation}
%  \omega_p = |p|\sqrt{p^2+ \hat V_p }
% \end{equation}
% and $\tilde  W_x(y) = \tilde W(x-y)$, with the Fourier series given by
% \begin{equation}
%  \widehat{\tilde W}_p=\frac{|p|}{\sqrt{\omega_p}} \hat W_p,
% \end{equation}
% . 
%
Operators of the form~\eqref{eq:HBF trafo}, with general $E$ and $W$, which feature the interaction of a particle (or several) with a quantum field by a linear coupling have been studied in many variants and are sometimes referred to as Fröhlich or polaron Hamiltonians.
They include the original Fröhlich model~\cite{Hfrohlich1937} and the Nelson model~\cite{nelson1964}.
One of the main motivations of the present work is to show rigorously that such a model arises as an effective description of a many-particle system.
\end{remark}

\begin{remark}\label{rem:dilute}
 If, instead of the mean-field regime, we were to consider a dilute $N$-particle system (as in~\cite{boccato2017}) then, as an intermediate step, we would find a Hamiltonian of the same form as $H^\BF$, but with $W=W_N$ depending on $N$ and converging to $\delta$ weakly as $N\to \infty$. Since $a^*(\delta)$ is not a densely defined operator, it is not immediately clear how to define the limiting dynamics. It was shown in~\cite{lampart2020} that a renormalized version of $H^\BF$ with $W=\delta$ can be constructed in this case and that its unitary group can be approximated using operators with ultraviolet cut-off, up to a divergent phase. In a dilute system the particle number $N$ would function as an effective ultraviolet cutoff, so we expect that in this case the phase would have to be modified accordingly in order to obtain the effective dynamics generated by the renormalized Bogoliubov-Fröhlich Hamiltonian.
\end{remark}

\subsection{The condensate-excitation representation}

In order to make the heuristics of the previous section rigorous, we will represent functions in $\cH_N$ by decomposing them into their components along the (constant) condensate wave-function $\phi_0$ and in the orthogonal complement. The rigoros implementation of this idea was pioneered by Lewin, Nam, Serfaty and Solovej~\cite{lewin2015_2} and we closely follow their presentation. It naturally gives rise to Fock space and thus the representation of the excitations by a quantum field.

Recall that $\phi_0\equiv 1$ denotes the constant function on $T^d$ and $(\phi_j)_{j\in \N_0}$ a complete orthonormal system in $L^2(T^d)$ with $\phi_j\in H^2(T^d)$, $j\in \N$.
Let 
\begin{equation*}
 P\psi= \int_{T^d}\psi(y) dy
\end{equation*}
be the projection to $\mathrm{span}\{\phi_0\}=:\fH_0$ in $L^2(T^d)$ and $Q=1-P$.
The function $\phi_0$ plays the role of the condensate wave-function, while $\fH_+:=QL^2(T^d)$ is the Hilbert space of an excitation.
The functions  $(\phi_j)_{j\in \N}$, with $j>0$, thus form a complete orthonormal system in $\fH_+$.
Let $\Psi \in L^2(T^d)^{\otimes_s N}$, then we can write
\begin{align}
 \Psi= (P+Q)^{\otimes N} \Psi = \sum_{j=0}^{N} \phi_0^{\otimes (N-j)} \otimes_s \Phi^{(j)}
\end{align}
with $\Phi^{(j)}\in \fH_+^{\otimes_s j}$. 
%Since $\fH_0^{\otimes_s (N-j)}$ is one-dimensional, 
We can thus represent $\Psi\in L^2(T^d)^{\otimes_s N}$ uniquely by the sequence
\begin{equation}
 \big(\Phi^{(n)}\big)_{n\leq N } \in \bigoplus_{n=0}^N \fH_+^{\otimes_s n} =:\cF_+^{\leq N}.
\end{equation}
We denote by $\cF_+=\Gamma(\fH_+)$ the Fock space of excitations, and by $\cF_+^{\leq N}$ the truncated Fock space, defined as above.

The $N$-boson space $\cH_N$ also contains an additional factor for the tracer particle. We define
\begin{equation}
 \cH_+=:L^2(T^d)\otimes \cF_+, \qquad \cH_+^{\leq N}=:L^2(T^d)\otimes \cF_+^{\leq N}.
\end{equation}
Leaving the first tensor factor untouched, we obtain a unitary
\begin{equation}
 U_N : \cH_N \to \cH_+^{\leq N} \subset \cH_+.
\end{equation}

By viewing $L^2(T^d)^{\otimes_s N}$ as embedded in the Fock space $\cF=\Gamma(L^2(T^d))$, the unitary can be expressed by the formula (see~\cite[Eq.(4.6)]{lewin2015_2})
\begin{equation}\label{eq:U_N form}
 U_N \Psi=\bigoplus_{j=0}^N Q^{\otimes j} \bigg(\frac{a_0^{N-j}}{\sqrt{(N-j)!}}\Psi\bigg).
\end{equation}
% where the annihilation operator $a(\phi_0)$ acts on $L^2(T^d)\otimes L^2(T^d)^{\otimes_s n}$ as
% \begin{equation}
%  \big(a(\phi_0)\psi\big)(x, y_1, \dots, y_{n-1}) = \sqrt{n}\int_{T^d} \psi (x, y_1, \dots, y_{n-1},z)dz.
% \end{equation}

Denote by $\cN_+:=d\Gamma(Q)$ the number operator in $\cF_+$. Then from~\eqref{eq:U_N form} one easily deduces the following identities  for $f,g\in \fH_+$ (see~\cite[Prop.14]{lewin2015_2}):
\begin{equation}\label{eq:a trafo}
\begin{aligned}
 U_Na_0^*a_0 U_N^* &= N-\cN_+\\
 U_N a^*(f) a_0 U_N^*&=a^*(f)\sqrt{N-\cN_+}\\
 U_N a_0^*a(f) U_N^*&=\sqrt{N-\cN_+} a(f) \\
 U_Na(f)^*a(g) U_N^*&= a(f)^*a(g)
\end{aligned}
\end{equation}
Note that on the left hand side we only consider number-preserving combinations of creation and annihilation operators, so these expressions can be viewed as operators on $\cH_N$.

We now apply this transformation to the interaction terms in the Hamiltonian.

\begin{lemma}\label{lem:V trafo}
Let $(\phi_n)_{n\in \N_0}$ be as above and assume that $V\in L^2(T^d)$  is an even function with $\int_{T^d} V(y)dy=0$ and relatively $\Delta$-bounded as an operator.
The following identity holds in the sense of closed operators
\begin{align*}
 U_N &\bigg(\sum_{j,k=1}^N V(y_j-y_k)\bigg) U_N^*
 = 2\sum_{j, k \in \N} V_{j0k0} a_j^*(N-\cN_+)a_k  
\\
&+\sum_{j, \in \N} V_{jk00} a_j^*\sqrt{N-\cN_+} a_k^* \sqrt{N-\cN_+} 
+\sum_{j, \in \N} V_{00jk} \sqrt{N-\cN_+} a_j  \sqrt{N-\cN_+}a_k
\\
 &+2\sum_{j,k, \ell\in \N} \Big(V_{jk\ell0} a^*_j \sqrt{N-\cN_+} a^*_k  a_\ell + V_{0jk\ell} a^*_j  a_k \sqrt{N-\cN_+} a_\ell\Big)
 +\sum_{j,k, \ell,m \in \N} V_{jk\ell m} a^*_j a^*_k a_\ell a_m
\end{align*}
with $V_{jk\ell m}$ defined by~\eqref{eq:V proj}.
% \begin{equation*}
%  V_{jk\ell m}=\int_{T^d} \int_{T^d} \bar\phi_j(y)\bar\phi_k(y')V(y-y')\phi_\ell(y')\phi_m(y) dy'dy.
% \end{equation*}
\end{lemma}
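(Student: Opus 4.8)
The plan is to start from the second-quantized form of the interaction. Viewed as an operator on the bosonic Fock space restricted to the $N$-particle sector, the two-body operator $\sum_{j\neq k=1}^N V(y_j-y_k)$ equals $\sum_{j,k,\ell,m\in\N_0} V_{jk\ell m}\, a_j^* a_k^* a_\ell a_m$, with $V_{jk\ell m}$ as in \eqref{eq:V proj} (this is the interaction in \eqref{eq:Hsecond} without the $N^{-1}$ prefactor). The relative $\Delta$-boundedness of $V$ together with $\phi_j\in H^2(T^d)$ guarantees that this is a closed operator and that all the index sums below converge on the core of finite-particle vectors with $H^2$ components, which is what makes the formal rearrangements legitimate and the final identity valid in the sense of closed operators. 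I would then organize the sum according to the number $r\in\{0,1,2,3,4\}$ of indices among $j,k,\ell,m$ that equal $0$, i.e. that refer to the condensate mode $\phi_0$.

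Next I would use the hypothesis $\int_{T^d}V=0$ to discard the degenerate terms. Since $\phi_0\equiv 1$, whenever the two functions sitting at the same spatial point are both constant, the corresponding variable can be integrated out and produces a factor $\int_{T^d}V=0$. Concretely, every term with $r\geq 3$ vanishes, and among the six terms with $r=2$ the two ``crossed'' ones, carrying coefficients $V_{0k\ell0}$ and $V_{j00m}$, vanish for the same reason. What survives at $r=2$ are the pair terms $V_{jk00}\,a_j^* a_k^* a_0 a_0$ and $V_{00\ell m}\,a_0^* a_0^* a_\ell a_m$, together with the two number terms $V_{0k0m}\,a_0^* a_k^* a_0 a_m$ and $V_{j0\ell0}\,a_j^* a_0^* a_\ell a_0$; at $r=1$ the four cubic terms with a single zero index; and at $r=0$ the fully off-condensate term $\sum_{j,k,\ell,m\in\N}V_{jk\ell m}a_j^* a_k^* a_\ell a_m$.

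I would then conjugate each surviving block by $U_N$ using the identities \eqref{eq:a trafo}. The $r=0$ block is a number-preserving polynomial in excitation operators only and is therefore left invariant, producing the last term of the claim. For the remaining blocks the strategy is to commute the condensate operators $a_0,a_0^*$ (which commute with every $a_j^{(*)}$, $j\geq 1$) into the bilinear patterns appearing in \eqref{eq:a trafo} and then substitute $a_0,a_0^*\mapsto\sqrt{N-\cN_+}$ in place: for instance $a_j^* a_k^* a_0 a_0=(a_j^* a_0)(a_k^* a_0)$ transforms exactly into $a_j^*\sqrt{N-\cN_+}\,a_k^*\sqrt{N-\cN_+}$, and the pair-annihilation block is handled symmetrically. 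The two number terms, after the substitution $a_0^* a_0\mapsto N-\cN_+$, combine into the diagonal contribution $2\sum_{j,k}V_{j0k0}\,a_j^*(N-\cN_+)a_k$ (the exact placement of $N-\cN_+$ being fixed by the commutator bookkeeping below), while the four cubic blocks combine into the two stated terms carrying a single $\sqrt{N-\cN_+}$. To produce the overall factors of $2$ I would invoke the evenness of $V$, which yields the symmetry $V_{jk\ell m}=V_{kjm\ell}$ (and, as a consistency check on self-adjointness, reality gives $\overline{V_{jk\ell m}}=V_{m\ell kj}$), so that paired blocks carry equal coefficients after relabeling.

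The main obstacle is the operator-ordering bookkeeping in the mixed $r=1$ and $r=2$ blocks. Because $\cN_+$, and hence $\sqrt{N-\cN_+}$, does not commute with the excitation operators, one has $[\cN_+,a_j^*]=a_j^*$ for $j\geq 1$, so the $\sqrt{N-\cN_+}$ factors must be tracked carefully to land in the exact positions asserted on the right-hand side, and the resulting $\cN_+$-independent ordering corrections must be accounted for. This is the step that pins down the precise form of the identity, and it is markedly more delicate than the $r=0$ and pair blocks, where the substitution is exact. A secondary, more technical point is to carry out every manipulation rigorously at the level of closed operators, using the $\Delta$-boundedness of $V$ and the $H^2$-regularity of the $\phi_j$ to supply a common core and the convergence of all the index sums.
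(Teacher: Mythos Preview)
Your approach is essentially the same as the paper's: expand the interaction in second-quantized form, sort terms by how many indices equal zero, kill the degenerate ones using $\int V=0$, combine equal blocks using the evenness symmetry $V_{jk\ell m}=V_{kjm\ell}$, and then conjugate by $U_N$ via the identities~\eqref{eq:a trafo}. The paper is terser but follows the same outline.

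One point where you overcomplicate matters: you flag the ``operator-ordering bookkeeping'' in the mixed blocks as the main obstacle and anticipate ``$\cN_+$-independent ordering corrections''. In fact there are none. The paper's device is simply to rewrite each surviving quartic monomial as a product of \emph{two number-preserving pairs} (using only that $a_0,a_0^*$ commute with $a_j,a_j^*$ for $j\geq 1$), and then to apply~\eqref{eq:a trafo} factor by factor. For instance
\[
a_j^* a_k^* a_\ell a_0=(a_j^* a_0)(a_k^* a_\ell)\ \longmapsto\ a_j^*\sqrt{N-\cN_+}\,a_k^* a_\ell,
\]
and similarly $a_j^* a_0^* a_k a_0=(a_j^* a_k)(a_0^* a_0)\mapsto a_j^* a_k(N-\cN_+)$, etc. The position of each $\sqrt{N-\cN_+}$ is dictated by which pair it came from; no commutators need to be tracked and no residual terms appear. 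So your plan is correct, but the step you single out as delicate is actually mechanical once you adopt the number-preserving-pair grouping.
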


\begin{proof}
 We write the mulitplication operator $\sum_{j,k} V(y_j-y_k)$ on $L^2(T^d)$ as in~\eqref{eq:Hsecond} (where $V_{jk\ell m}$ is well-defined since the $\phi_j\in H^2$ and $V$ is $\Delta$-bounded). We then apply the transformation $U_N$ using the identities~\eqref{eq:a trafo}. Note that, since $V$ is even, $V_{0j0k}=V_{j0k0}$ are real. In order to transform the quartic expressions in $a_j$, $a_j^*$, these must be grouped into number-preserving pairs, e.g. for $j,k, \ell>0$, 
 \begin{equation}
  U_N a_j^*a_k^*  a_\ell a_0 U_N^*= U_N a_j^* a_0 U_N^* U_N a_k^*  a_\ell  U_N^*= a_j^* \sqrt{N-\cN_+} a_k^* a_\ell.
 \end{equation}
The claim then follows by noting that $V_{jk\ell 0}=V_{kj0\ell}$ and that $\int V=0$ implies $V_{j00k}=0=V_{0jk0}$ for $j, k\in \N_0$.
\end{proof}

\begin{lemma}\label{lem:W trafo}
Assume that $W\in L^2(T^d)$ be $\Delta$-bounded with $\int_{T^d} W(y)dy=0$ and set $W_x(y):=W(x-y)$.
The following identity holdsin the sense of closed operators
\begin{align*}
 U_N &\bigg(\sum_{j=1}^N W(x-y_j)\bigg) U_N^*
 = \sqrt{N-\cN_+} a\big(W_x\big) + 
 a^*\big(W_x\big) \sqrt{N-\cN_+}%+ (N-\cN_+) \langle \phi, W_x \phi\rangle 
 + d \Gamma (QW_x Q).
\end{align*}
\end{lemma}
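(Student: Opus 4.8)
The plan is to follow the proof of Lemma~\ref{lem:V trafo}, which is in fact simpler in the present case because the tracer interaction is linear in the boson field (quadratic in the creation and annihilation operators) rather than quartic, so no regrouping into number-preserving pairs of four operators is needed.

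First I would write the one-body multiplication operator in second-quantized form with respect to the basis $(\phi_n)_{n\in\N_0}$, exactly as in~\eqref{eq:W decomp}, namely $\sum_{j=1}^N W(x-y_j)=\sum_{j,k=0}^\infty (W_x)_{j,k}\,a_j^*a_k$ with $(W_x)_{j,k}=\int_{T^d}\bar\phi_j\phi_k W_x$. This expansion is meaningful because $\phi_j\in H^2(T^d)$ and $W$ is $\Delta$-bounded, so the matrix elements are finite and the operator is closed on a suitable domain.

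Next I would split the double sum according to how many of the two indices vanish. The term $j=k=0$ carries the coefficient $(W_x)_{0,0}=\int_{T^d}W_x=\int_{T^d}W=0$ and hence drops out; the same computation shows $PW_x=0$, i.e.\ $W_x\in\fH_+$, so that $a(W_x)$, $a^*(W_x)$ and $QW_xQ$ are genuinely defined on the excitation space. I would then apply the transformation identities~\eqref{eq:a trafo} term by term: for $k\geq 1$ one has $U_N a_0^*a_k U_N^*=\sqrt{N-\cN_+}\,a_k$, for $j\geq 1$ one has $U_N a_j^*a_0 U_N^*=a_j^*\sqrt{N-\cN_+}$, and for $j,k\geq 1$ the number-conserving pair is left invariant, $U_N a_j^*a_k U_N^*=a_j^*a_k$.

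It then remains only to recognize the three resulting sums. Since $W$ is real, $(W_x)_{j,0}=\langle\phi_j,W_x\rangle$ and $(W_x)_{0,k}=\langle W_x,\phi_k\rangle$ are precisely the coefficients appearing in $a^*(W_x)$ and $a(W_x)$ (the $j=0$ and $k=0$ contributions being absent, as just noted), so the first two sums collapse to $\sqrt{N-\cN_+}\,a(W_x)$ and $a^*(W_x)\sqrt{N-\cN_+}$, while the number-conserving remainder $\sum_{j,k\geq 1}(W_x)_{j,k}a_j^*a_k$ is by definition $d\Gamma(QW_xQ)$. I expect the only genuine subtlety to be in the phrase ``in the sense of closed operators'': the algebra is routine, but one must justify convergence of the sums and pin down the domains. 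The clean way is to verify the identity on a dense core of finite-particle vectors of sufficient regularity and then extend by closure, with the $\Delta$-boundedness of $W$ supplying the relative bounds. Because $x$ enters only through $W_x$ while $U_N$ acts solely on the boson factor, the whole argument is fibered over the tracer coordinate $x$, so the $x$-dependence introduces no difficulty beyond what already appears in Lemma~\ref{lem:V trafo}.
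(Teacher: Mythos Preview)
Your proposal is correct and follows essentially the same route as the paper's proof: expand the interaction in second-quantized form as in~\eqref{eq:W decomp}, use $(W_x)_{0,0}=\int W=0$ to kill the condensate--condensate term, apply the identities~\eqref{eq:a trafo}, and then resum the three remaining pieces as in~\eqref{eq:W recompose}. Your extra remarks on domains, closure, and the fibering over $x$ go slightly beyond the paper's terse argument but do not change the method.
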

\begin{proof}
We can rewrite $W_x$ as in~\eqref{eq:W decomp}.
%  \begin{equation}
%   \sum_{j=1}^N W(x-y_j)= \sum_{k, \ell = 0}^\infty (W_x)_{k, \ell} a^*_k a_\ell
%  \end{equation}
% with $(W_x)_{k, \ell}=\langle \phi_k, W_x \phi_\ell \rangle$. 
The claim then follows from the identities~\eqref{eq:a trafo} by noting that
\begin{equation}\label{eq:W recompose}
\begin{aligned}
(W_x)_{00}&=PW_xP=\int_{T^d} W(x-y) dy =0, \\
  \sum_{k=1}^\infty  (W_x)_{k,0}a_k^*&=\sum_{k=1}^\infty a^*(\phi_k) \langle \phi_k, W_x \phi_0\rangle = a^*\big(QW_x\phi_0\big)=a^*(W_x),\\
  \sum_{k, \ell=1}^\infty a_k^* a_\ell (W_x)_{k,\ell} &= d\Gamma(QW_xQ).
\end{aligned}
\end{equation}

\end{proof}

\section{Main Result}

We assume that the $N$-particle initial state $\Psi_N\in \cH_N$ is a member of a sequence such that
\begin{equation}\label{eq:initial}
 \lim_{N\to \infty} U_N \Psi_N=:\Phi
\end{equation}
exists in $\cH_+$. 
This assumption means that the number of excitations out of the condensate in the initial states $\Psi_N$ remains finite \textit{with probability one} as $N\to \infty$. Note, however, that the \textit{expected number} of excitations $\langle \cN_+ U_N\Psi_N, U_N\Psi_N \rangle$ may diverge.

Recall that the expression of $H^\BF$ in~\eqref{eq:H-BF} is given by $H_0:=-\frac{1}{2m}\Delta_x + d\Gamma(-\Delta)$ plus several terms that are linear or quadratic in creation and annihilation operators. If $V, W\in L^2(T^d)$, the latter are bounded relative to $\cN_+$ (see Proposition~\ref{prop:dom BF}). Consequently, $H^\BF$ is a well defined operator on $D(H_0)$, since  on $\cF_+$
 \begin{equation}
  d\Gamma(-\Delta) \geq 4\pi^2 \cN_+.
 \end{equation}
In fact, $H^\BF$ is essentially self-adjoint on $D(H_0)$, see Proposition~\ref{prop:dom BF}. 
Thus the time evolution under $H^\BF$, $e^{-i H^\BF t} \Phi$, is well-defined and we may compare it to the time evolution with $H_N$ using $U_N$.

If we make the assumption that $\Phi\in D(H_0)$, then we obtain a quantitative estimate proving the closeness of the two evolutions for large $N$.

\begin{theorem}\label{thm:quant}
Assume that $V, W\in L^2(T^d, \R)$ are infinitesimally bounded relative to $-\Delta$ and satisfy $V(-y)=V(y)$ and $\int_{T^d} V=0=\int_{T^d} W$.
There exists $v>0$ such that for all $\Phi \in  D(H_0)$  there is $K>0$ such that for all $N\in \N$ and $t\in \R$
	\begin{equation*}
	\big \|U_N e^{-i H_N t} U_N^*\Phi^{\leq N} - e^{-i H^\BF t} \Phi\big\|_{\cH_+}\leq K e^{v|t|} N^{-1/4}\;.
	\end{equation*}
\end{theorem}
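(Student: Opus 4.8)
The plan is to transport both evolutions to the excitation Fock space and run a Grönwall estimate on their difference, the entire difficulty being to control the terms of $U_N H_N U_N^*$ that have no counterpart in $H^\BF$.

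First I set $\wt H_N := U_N H_N U_N^*$, which is self-adjoint on $\cH_+^{\leq N}$ and satisfies $U_N e^{-iH_N t}U_N^* = e^{-i\wt H_N t}$ there. Applying Lemmas~\ref{lem:V trafo} and~\ref{lem:W trafo} with $g_B=N^{-1}$ and $g_I=N^{-1/2}$, and writing $\beta_N:=\sqrt{(N-\cN_+)/N}$, I decompose $\wt H_N = H^\BF + R_N$ on $\cH_+^{\leq N}$. The remainder $R_N$ collects: (a) the corrections $(\beta_N^2-1)=-\cN_+/N$ and $(\beta_N-1)$ multiplying the quadratic pairing terms and the linear $a^*(W_x),a(W_x)$ terms, of relative size $O(\cN_+/N)$; (b) the genuinely new cubic term $\tfrac{2}{\sqrt N}\sum_{j,k,\ell}V_{jk\ell0}\,a_j^*\beta_N a_k^*a_\ell+\text{h.c.}$ together with $N^{-1/2}d\Gamma(QW_xQ)$, both of size $O(N^{-1/2})$; and (c) the quartic term $N^{-1}\sum V_{jk\ell m}a_j^*a_k^*a_\ell a_m$ of size $O(N^{-1})$. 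Using $V,W\in L^2$ and Proposition~\ref{prop:dom BF}, each summand of $R_N$ is bounded on $\cH_+^{\leq N}$ by an explicit negative power of $N$ times a power of $(\cN_++1)$ and $d\Gamma(-\Delta)$; the dangerous contribution is the cubic term, whose natural bound is $\|R_N^{\mathrm{cub}}\Psi\|\le CN^{-1/2}\|(\cN_++1)^{3/2}\Psi\|$.

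Next I set $\Psi_N(t)=e^{-i\wt H_N t}\Phi^{\leq N}$, $\Psi^\BF(t)=e^{-iH^\BF t}\Phi$, $\chi(t)=\Psi_N(t)-\Psi^\BF(t)$, and differentiate $\|\chi\|^2$. Writing $P_{\leq N}$ for the projection onto $\cH_+^{\leq N}$ and inserting $\pm i\wt H_N P_{\leq N}\Psi^\BF$, the ``diagonal'' contribution $\re\langle\chi,-i\wt H_N P_{\leq N}\chi\rangle$ vanishes: its $P_{\leq N}$-part is the real part of $-i$ times an expectation of the self-adjoint $\wt H_N$, and its $P_{>N}$-part pairs $P_{>N}\chi$ with the vector $\wt H_N P_{\leq N}\chi\in\cH_+^{\leq N}$, to which it is orthogonal since $\wt H_N$ preserves $\cH_+^{\leq N}$. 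Using $\wt H_N P_{\leq N}-H^\BF = R_N P_{\leq N}-H^\BF P_{>N}$, this leaves
\[ \tfrac{d}{dt}\|\chi\| \le \|R_N P_{\leq N}\Psi^\BF\| + \|H^\BF P_{>N}\Psi^\BF\|, \]
and integration gives $\|\chi(t)\|\le \|P_{>N}\Phi\| + \int_0^{|t|}\big(\|R_N P_{\leq N}\Psi^\BF(s)\|+\|H^\BF P_{>N}\Psi^\BF(s)\|\big)\,ds$. The initial tail obeys $\|P_{>N}\Phi\|\le N^{-1/2}\|\cN_+^{1/2}\Phi\|$, and both the $H^\BF P_{>N}$-tail and the middle term are, after the bounds above, controlled by moments of $\Psi^\BF(s)$; the crux is the cubic part, dominated by $CN^{-1/2}\|(\cN_++1)^{3/2}\Psi^\BF(s)\|$.

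Finally I would establish a priori bounds on $\langle(\cN_++1)^k\rangle$ and $\langle(H_0+1)^k\rangle$ along the $H^\BF$-flow by estimating the commutators $[H^\BF,(\cN_++1)^k]$ (which involve only the relatively bounded, number-changing pieces of $H^\BF$), obtaining $\tfrac{d}{dt}\langle(\cN_++1)^k\rangle\le C_k\langle(\cN_++1)^k\rangle+\cdots$ and hence exponential-in-time growth --- this is the source of the factor $e^{v|t|}$. The obstruction is that $\Phi\in D(H_0)$ only guarantees finiteness of the first and second moments of $\cN_+$ (via $\cN_+\le(4\pi^2)^{-1}H_0$), not the third moment that $\|(\cN_++1)^{3/2}\Psi^\BF\|$ demands. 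I therefore regularize, replacing $\Phi$ by $\Phi_M=\mathbf 1_{\cN_+\le M}\Phi$: since both propagators are unitary, $\|\chi(t;\Phi)-\chi(t;\Phi_M)\|\le 2\|\Phi-\Phi_M\|\le 2M^{-1}\|\cN_+\Phi\|$ uniformly in $t$ and $N$, while for $\Phi_M$ all moments are finite with $\langle(\cN_++1)^3\Phi_M\rangle^{1/2}\le (M+1)\langle(\cN_++1)\Phi\rangle^{1/2}$, so the middle term contributes $\le C' M e^{v|t|}N^{-1/2}$. Combining, $\|\chi(t;\Phi)\|\le C\big(Me^{v|t|}N^{-1/2}+M^{-1}\big)$, and optimizing $M=N^{1/4}$ yields the claimed rate. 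The main obstacle is exactly this balance: the new cubic interaction, of size $N^{-1/2}$ but weighted by $\cN_+^{3/2}$, cannot be tamed by the $D(H_0)$-norm alone, and the regularization required to handle it degrades the natural $N^{-1/2}$ rate to $N^{-1/4}$; a secondary technical burden is making the a priori commutator estimates rigorous, which is why one works throughout with the finite-moment data $\Phi_M$ and removes the cutoff only at the end.
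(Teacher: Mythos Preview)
Your strategy is viable but differs substantially from the paper's. The paper never estimates $\|R_N\Psi\|$ in operator norm. Instead it uses the bilinear Duhamel identity (Lemma~\ref{lem:duhamel}), bounding $|\langle e^{-iH_1s}\Psi_1,(H_1-H_2)e^{-iH_2s}\Psi_2\rangle|$; in this form the three $a,a^*$'s of the cubic term split as two on one vector and one on the other, so only $\|\cN_+\,\cdot\|$ and $\|\cN_+^{1/2}\cdot\|$ enter and no $\cN_+^{3/2}$-moment---hence no regularisation---is needed. The paper also inserts an auxiliary Hamiltonian $H_N^\aux$ (the quadratic truncation of $U_NH_NU_N^*$) and proceeds in two steps (Lemmas~\ref{lem:aux} and~\ref{lem:aux-BF}); the point is that in the first step both evolved states live in $\cH_+^{\leq N}$, so no $P_{>N}$-tail arises, while in the second step only your benign ``type~(a)'' corrections remain. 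The hypothesis $\Phi\in D(H_0)$ is used not for the cubic term but for the quartic term and for $N^{-1/2}d\Gamma(QW_xQ)$, which are bounded via $\|d\Gamma(-\Delta)e^{-iH^\aux_Ns}\Phi^{\leq N}\|$ (controlled since $H^\aux_N-H_0$ is uniformly $\cN_+$-bounded). Each inner product in the Duhamel integrand is then $O(N^{-1/2})$, and the rate $N^{-1/4}$ comes simply from taking the square root of the squared-norm estimate---not from an optimisation over a cutoff $M$. Your route reaches the same exponent with more machinery: propagation of $(\cN_++1)^3$-moments along the $H^\BF$-flow, control of the tail $H^\BF P_{>N}\Psi^\BF$, and a genuine operator-norm bound on the quartic term (which for merely $\Delta$-bounded $V$ needs mixed $\cN_+$/$d\Gamma(-\Delta)$ moments that you only gesture at). The paper's bilinear method is shorter and makes the role of $D(H_0)$ transparent: it supplies one power of $d\Gamma(-\Delta)$ on one side of the pairing, not extra $\cN_+$-moments.
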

The proof of Theorem~\ref{thm:quant} will be the content of the next section.
Note that in this statement only $K$ depends on the intial states. By density of $D(H_0)$ in $\cH_+$, the statement can be extended to any $\Phi\in \cH_+$, but without an explicit rate of convergence.

\begin{theorem}\label{thm:gen}
 Assume that $V, W\in L^2(T^d, \R)$ are infinitesimally bounded relative to $-\Delta$ and satisfy $V(-y)=V(y)$ and $\int_{T^d} V=0=\int_{T^d} W$. Then for every $\Phi \in \cH_+$ and  and every sequence $\Psi_N$, $N\in \N$, with $\lim_{N\to \infty} U_N\Psi_N=\Phi$ we have
\begin{equation*}
\lim_{N\to \infty}\big \|U_N e^{-i H_N t} \Psi_N - e^{-i H^\BF t} \Phi\big\|_{\cH_+}=0 
\end{equation*}
uniformly in $t$ on compact subsets of $\R$.
\end{theorem}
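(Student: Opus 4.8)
The plan is to deduce Theorem~\ref{thm:gen} from the quantitative estimate of Theorem~\ref{thm:quant} by a density argument, exploiting that all three maps involved are norm-preserving: $U_N$ is an isometry of $\cH_N$ onto $\cH_+^{\leq N}\subset\cH_+$, while $e^{-iH_N t}$ and $e^{-iH^\BF t}$ are unitary. The key observation is that Theorem~\ref{thm:quant} already controls the evolution for the special initial data $U_N^*\Phi^{\leq N}$ with $\Phi\in D(H_0)$; the two remaining freedoms — that $\Phi$ is now only in $\cH_+$, and that $\Psi_N$ is an arbitrary approximating sequence rather than $U_N^*\Phi^{\leq N}$ — both enter through the initial time only, and are therefore removed by unitarity without affecting the rate.

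Concretely, fix $\Phi\in\cH_+$, a sequence $\Psi_N$ with $U_N\Psi_N\to\Phi$, and $T>0$. Given $\eps>0$, I would use the density of $D(H_0)$ in $\cH_+$ to pick $\Phi_\eps\in D(H_0)$ with $\|\Phi-\Phi_\eps\|<\eps$, and insert the comparison vector $U_N^*\Phi_\eps^{\leq N}\in\cH_N$, where $\Phi_\eps^{\leq N}$ denotes the orthogonal projection of $\Phi_\eps$ onto $\cH_+^{\leq N}$. The triangle inequality then gives, for every $t$,
\begin{align*}
\big\|U_N e^{-iH_N t}\Psi_N - e^{-iH^\BF t}\Phi\big\|
&\leq \big\|U_N e^{-iH_N t}\big(\Psi_N - U_N^*\Phi_\eps^{\leq N}\big)\big\| \\
&\quad + \big\|U_N e^{-iH_N t}U_N^*\Phi_\eps^{\leq N} - e^{-iH^\BF t}\Phi_\eps\big\| \\
&\quad + \big\|e^{-iH^\BF t}\big(\Phi_\eps - \Phi\big)\big\|.
\end{align*}

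I would then estimate the three terms in turn. The middle term is precisely the quantity controlled by Theorem~\ref{thm:quant}: since $\Phi_\eps\in D(H_0)$ it is at most $K_\eps e^{v|t|}N^{-1/4}\leq K_\eps e^{vT}N^{-1/4}$ for $|t|\leq T$, hence tends to $0$ uniformly on $[-T,T]$ as $N\to\infty$. The third term equals $\|\Phi_\eps-\Phi\|<\eps$ for all $t$, by unitarity of $e^{-iH^\BF t}$. For the first term, unitarity of $e^{-iH_N t}$ together with $U_N U_N^* = \mathbf 1$ on $\cH_+^{\leq N}$ turns it into the $t$-independent quantity $\|U_N\Psi_N-\Phi_\eps^{\leq N}\|$; since $U_N\Psi_N\to\Phi$ and the truncations satisfy $\Phi_\eps^{\leq N}\to\Phi_\eps$ in $\cH_+$, its limes superior as $N\to\infty$ is at most $\|\Phi-\Phi_\eps\|<\eps$. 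Taking $\sup_{|t|\leq T}$ before $\limsup_{N\to\infty}$, the three bounds combine to $2\eps$, and letting $\eps\to0$ yields the claimed convergence, uniform on $[-T,T]$.

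There is no genuine analytic obstacle here; the only technical point to verify is the norm convergence $\Phi_\eps^{\leq N}\to\Phi_\eps$, which holds because $\bigcup_N\cH_+^{\leq N}$ is dense in $\cH_+$, so the orthogonal projections onto $\cH_+^{\leq N}$ converge strongly to the identity. The substantive content is entirely carried by Theorem~\ref{thm:quant}: the purpose of this argument is merely to discard the two auxiliary hypotheses ($\Phi\in D(H_0)$ and the specific choice $\Psi_N=U_N^*\Phi^{\leq N}$), at the price of forfeiting the explicit rate $N^{-1/4}$.
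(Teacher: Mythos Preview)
Your proof is correct and follows essentially the same approach as the paper: both arguments invoke density of $D(H_0)$, introduce the comparison vector $U_N^*\Phi_\eps^{\leq N}$, split via the same three-term triangle inequality, and use unitarity/isometry together with Theorem~\ref{thm:quant} to control the pieces. The only differences are cosmetic bookkeeping (you take a $\limsup$ and arrive at $2\eps$, while the paper fixes $N$ large and obtains $4\eps$ plus a term below $\eps$).
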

\begin{proof}
This follows from  Theorem~\ref{thm:quant} above by an approximation argument. Let $\Phi\in \cH_+$ be given and $T>0$, $\eps>0$. Let $\widetilde \Phi\in D(H_0)$ with $\|\Phi-\widetilde\Phi\|_{\cH_+} < \eps$ and set $\widetilde \Psi_N=U_N^*\widetilde \Phi^{\leq N}$. Since $\lim_{N\to \infty} U_N\Psi_N=\Phi$, we have for  sufficiently large $N$
 \begin{equation}
  \|\Psi_N-\widetilde \Psi_N\|_{\cH_N} \leq \|U_N\Psi_N- \Phi\|_{\cH_+} + \|\Phi-\widetilde\Phi\| + \|\widetilde \Phi^{>N}\|< 3\eps.
 \end{equation}
By unitarity of the time evolutions we thus have 
\begin{equation}
 \big \|U_N e^{-i H_N t} \Psi_N - e^{-i H^\BF t} \Phi\big\|_{\cH_+} < 4\eps + \big \|U_N e^{-i H_N t} \widetilde\Psi_N - e^{-i H^\BF t} \widetilde\Phi\big\|_{\cH_+}.
\end{equation}
By Theorem~\ref{thm:quant} the last term is bounded for $|t|\leq T$ by
\begin{equation}
 \big \|U_N e^{-i H_N t} \widetilde\Psi_N - e^{-i H^\BF t} \widetilde\Phi\big\|_{\cH_+} \leq K e^{v T}N^{-1/4},
\end{equation}
so it is smaller than $\eps$ for sufficiently large $N$.
\end{proof}

\section{Proof of convergence}

Throughout this section we assume the hypothesis of Theorem~\ref{thm:quant}.
As a fist step we will prove that the time evolution $U_N e^{-iH_N t} U_N^*$ is well approximated by the time evolution generated by a truncation of $U_N H_N U_N^*$, where terms that are  more than quadratic in the creation and annihilation operators are neglected. We collect the leading-order terms in an auxiliary operator $H^\aux$.
This operator acts on $\cH_+^{\leq N}$ and is naturally extended to $\cH_+$ by zero. Since $\langle \phi_k, \Delta \phi_0\rangle=0= \langle \phi_0, \Delta \phi_k\rangle$ for $k\in \N_0$, we have
\begin{equation}
-U_N \sum_{j=1}^N \Delta_{y_j} U_N^* =  d\Gamma^{\leq N}(-\Delta),
\end{equation}
where $d\Gamma^{\leq N}(-\Delta)$ denotes the projection of $d\Gamma(-\Delta)$ to $\cH_+^{\leq N}$. In view of Lemmas~\ref{lem:W trafo},~\ref{lem:V trafo}, we  define $H_N^\aux$ by the expression
\begin{align}
H^\aux_N:= &- \frac1{2m} \Delta_x + d\Gamma^{\leq N}(-\Delta) +  \frac{2}{N}\sum_{j, k>1} V_{j0k0} a_j^*(N-\cN_+)a_k \notag \\
&+ \frac{1}{N}\sum_{j, k>1} V_{jk00} a_j^*\sqrt{N-\cN_+} a_k^* \sqrt{N-\cN_+} \notag\\
&+\frac{1}{N}\sum_{j, k>1} V_{00jk}  \sqrt{N-\cN_+}
a_k\sqrt{N-\cN_+} a_k\notag \\
& + \frac{1}{\sqrt{N}} a^*(W_x)\sqrt{N-\cN_+} +  \frac{1}{\sqrt{N}} \sqrt{N-\cN_+}a(W_x)\label{eq:haux}
\end{align}
on $D(H_N^\aux)=U_ND(H_N)\subset \cH_+^{\leq N}$. Note that $H_N^\aux$ is self-adjoint by Lemma~\ref{lem:dom aux} and that $H^\BF$ is obtained from $H_N^\aux$ by simply replacing $\sqrt{1-\cN_+/N}$ by one and extending to $\cH_+$.

To prove closeness of $e^{-i H^\aux_N t}$ and $U_N e^{-i H_N t}U_N^*$, and later $e^{-i H^\BF t}$ and $e^{-i H^\aux_N t}$, the crucial ingredient is control of the number of excitations for all times. 

\begin{lemma}\label{lem:alpha}
For any $\Phi\in D(H_0)$ set
 \begin{align*}
  \alpha(t):=&\| (\cN_++1) U_N e^{-i H_N t}U_N^* \Phi^{\leq N}\|^2 \\
  \alpha^\aux(t):=&\| (\cN_++1)  e^{-i H^\aux_N t} \Phi^{\leq N}\|^2\\
  \alpha^\BF(t):=&\| (\cN_++1)  e^{-i H^\BF t} \Phi\|^2.
 \end{align*}
There exists a constant $v$ such that for all $\Phi\in D(H_0)$, $N\in \N$ and $t\in \R$ we have
\begin{equation*}
 \alpha^\bullet(t) \leq  \alpha^\bullet(0) e^{v|t|}
\end{equation*}
for $\bullet\in \{\emptyset, \aux, \BF\}$.
\end{lemma}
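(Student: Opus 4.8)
The plan is to establish all three bounds at once through a Gronwall-type differential inequality. Write $\Psi^\bullet(t)$ for the evolved vector whose weighted norm defines $\alpha^\bullet$, so that $\alpha^\bullet(t)=\langle\Psi^\bullet(t),(\cN_++1)^2\Psi^\bullet(t)\rangle$ with $i\partial_t\Psi^\bullet=H^\bullet\Psi^\bullet$, where $H^\emptyset=U_NH_NU_N^*$, $H^\aux=H^\aux_N$ and $H^\BF$ is the Bogoliubov–Fröhlich Hamiltonian. Differentiating formally yields
\[
\frac{d}{dt}\alpha^\bullet(t)=\langle\Psi^\bullet(t),\,i[H^\bullet,(\cN_++1)^2]\,\Psi^\bullet(t)\rangle,
\]
and the decisive structural fact is that every number-preserving part of $H^\bullet$ commutes with $(\cN_++1)^2$ and drops out. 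Inspecting the explicit expressions from Lemmas~\ref{lem:V trafo} and~\ref{lem:W trafo} (respectively \eqref{eq:haux} and \eqref{eq:H-BF}), the only surviving contributions are the pair terms carrying $V_{jk00},V_{00jk}$, the linear terms $a^*(W_x),a(W_x)$, and — for $H^\emptyset$ only — the cubic terms with coefficients $V_{jk\ell0},V_{0jk\ell}$; the quartic and the scattering terms are number-preserving and vanish from the commutator.

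The core estimate is to bound each surviving commutator by $C(\cN_++1)^2$ as a quadratic form, with $C$ independent of $N$. For a monomial $A$ that raises $\cN_+$ by $p$ one has $[(\cN_++1)^2,A]=A\,(2p(\cN_++1)+p^2)$, so by Cauchy–Schwarz it suffices to control $\langle\Psi,A(\cN_++1)\Psi\rangle$ using relative bounds for $A$. Since $V,W\in L^2(T^d)$, the pair-creation kernel lies in $L^2(T^d\times T^d)$ and $\|W_x\|_{L^2}=\|W\|_{L^2}$ uniformly in $x$; this gives $\|a^*(W_x)\Psi\|\le\|W\|\,\|(\cN_++1)^{1/2}\Psi\|$ and an analogous $(\cN_++1)$-bound for the quadratic $V$-terms, while the factors $\sqrt{(N-\cN_+)/N}\le 1$ on $\cF_+^{\leq N}$ only help. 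Together with the adjoint structure, each quadratic and linear commutator is then $\le C(\|V\|+\|W\|)(\cN_++1)^2$.

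The cubic terms carry the only genuine $N$-dependence and are the main obstacle. Writing $A=N^{-1/2}\sum V_{jk\ell0}a_j^*a_k^*a_\ell$ (times a bounded factor of norm $\le 1$), one has the component bound $\|(A^*\Psi)_n\|\le C\|V\|\,(n+1)^{3/2}\|\Psi_{n+1}\|$. Expanding $\langle\Psi,A(2\cN_++3)\Psi\rangle$ in the excitation number and using $\Psi\in\cF_+^{\leq N}$, the weight attached to the $n$-th term is of order $N^{-1/2}n^{5/2}$, which is $\le n^2$ precisely because $n\le N$. Hence even the cubic contribution is $\le C\|V\|(\cN_++1)^2$ uniformly in $N$; the naive bound $\|(\cN_++1)^2\Psi\|\le(N+1)\|(\cN_++1)\Psi\|$ is too lossy here, and one must keep the sharper power counting in $n$.

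Assembling the pieces gives $\frac{d}{dt}\alpha^\bullet(t)\le v\,\alpha^\bullet(t)$ with a single $N$-independent $v=v(\|V\|,\|W\|)$, and Gronwall's lemma yields $\alpha^\bullet(t)\le\alpha^\bullet(0)e^{v|t|}$. The finiteness of $\alpha^\bullet(0)$ follows from $\Phi\in D(H_0)\subset D(\cN_+)$. The remaining point is to legitimize differentiating against the unbounded weight $(\cN_++1)^2$. For $\bullet\in\{\emptyset,\aux\}$ this is harmless, since $\cN_+\le N$ on $\cH_+^{\leq N}$ makes $(\cN_++1)^2$ bounded for fixed $N$ and the only real work is the $N$-uniformity of $v$ obtained above. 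For $\bullet=\BF$, where $\cN_+$ is genuinely unbounded, I would first run the same computation with a regularized weight such as $(\cN_++1)^2(1+\cN_+/M)^{-1}$, verify that all the commutator bounds hold uniformly in $M$, and then pass to $M\to\infty$ by monotone convergence, using the invariance of the relevant domain under $e^{-iH^\BF t}$ (via the essential self-adjointness on $D(H_0)$ from Proposition~\ref{prop:dom BF}) to ensure the manipulations are valid.
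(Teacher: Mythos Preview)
Your proof is correct and follows essentially the same strategy as the paper: a Gr\"onwall argument for $\langle\Psi^\bullet,(\cN_++1)^2\Psi^\bullet\rangle$, noting that number-preserving pieces drop out of the commutator, bounding the linear and quadratic survivors via the $L^2$-norms of $V,W$, and controlling the cubic remainder (present only for $H^\emptyset$) using $\cN_+\le N$ on $\cF_+^{\le N}$. The one step you state without justification, the sector bound $\|(A^*\Psi)_n\|\le C\|V\|(n+1)^{3/2}\|\Psi_{n+1}\|$, is true but not immediate---$V_{jk\ell 0}$ is \emph{not} in $\ell^2(\N^3)$ in general---and the clean way to get it is precisely the position-space Cauchy--Schwarz the paper uses (splitting $a_ja_k$ onto one vector and $V\,a_\ell$ onto the other), after which your power-counting $N^{-1/2}n^{5/2}\le n^2$ goes through; the paper packages the same estimate as an operator bound $\|[H,\cN_+]\Psi\|\le C\|(\cN_++1)\Psi\|$ rather than sector by sector, and your regularisation for the $\mathrm{BF}$ case is more explicit than the paper's.
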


\begin{proof}
Since $\Phi\in D(\cN_+)$, the statement clearly holds for $t=0$. We will use Grönwall's Lemma to obtain a bound for all $t\geq 0$ (the proof for $t\leq 0$ is the same).

 We first prove the claim for $\alpha$ and $\alpha^\aux$ (since $H_N^\aux$ is obtained from $U_N H_NU_N^*$ by dropping some terms the proof for $\alpha^\aux$ is contained in the one fo $\alpha$).
  By Lemma~\ref{lem:dom aux}, we have $\Phi^{\leq N}\in U_ND(H_N)= D(H^\aux)$. Thus
 \begin{align}
 \frac{d \alpha}{dt}(t)&=\frac{d }{dt}\Big\langle U_N e^{-i H_N t}U_N^*\Phi^{\leq N}, (\cN_++1)^2 U_N e^{-i H_N t}U_N^*\Phi^{\leq N}\Big\rangle \notag \\
 &= \Big\langle U_N e^{-i H_N t}U_N^*\Phi^{\leq N},i[U_N H_N U_N^*, (\cN_++1)^2] U_N e^{-i H_N t}U_N^*\Phi^{\leq N}\Big\rangle .
 \end{align}
With $[U_N H_N U_N^*, (\cN_++1)^2]=(\cN_++1) [U_N H_N U_N^*, \cN_+] + [U_N H_N U_N^*, \cN_+](\cN_++1)$ and Cauchy-Schwarz we obtain
\begin{equation}
\Big| \frac{d \alpha}{dt}(t)\Big| \leq 2 \sqrt{\alpha(t)} \left\|[U_N H_N U_N^*, \cN_+]e^{-i H_N t}U_N^*\Phi^{\leq N}\right\|.
\end{equation}
We can thus prove the claim by bounding the latter norm in terms of $\sqrt{\alpha}$.

To do this, we start with the terms from $H^\aux_N$. Using that
 \begin{align}
 [a_j^* a_j,a_k^* a_\ell^*]&= (\delta_{jk} + \delta_{j\ell} ) a^*_j a^*_j \\
 [a_j^* a_j,a_k a_\ell]&= -(\delta_{jk} + \delta_{j\ell} ) a_j a_j
\end{align}
and $\cN_+=\sum_{j=1}^\infty a_j^*a_j$ we find 
\begin{align}
\Big[\cN_+, \sum_{j,k=1}^\infty V_{jk00} a_j^*a_k^* \Big]&=2 \sum_{j,k=1}^\infty V_{jk00} a_j^*a_k^* \label{eq:quad comm+}\\
\Big[\cN_+, \sum_{j,k=1}^\infty V_{00jk} a_ja_k \Big]&=-2 \sum_{j,k=1}^\infty V_{00jk} a_ja_k.\label{eq:quad comm-}
\end{align}
We thus have
\begin{align}
 [\cN_+, H_N^\aux]=& \frac{2}{N}\sum_{j, k=1}^\infty V_{jk00} a_j^*\sqrt{N-\cN_+} a_k^* \sqrt{N-\cN_+} \notag\\
&-\frac{2}{N}\sum_{j, k=1}^\infty V_{00jk}  \sqrt{N-\cN_+}
a_k\sqrt{N-\cN_+} a_k\notag \\
& + \frac{1}{\sqrt{N}} a^*(W_x)\sqrt{N-\cN_+} - \frac{1}{\sqrt{N}} \sqrt{N-\cN_+}a(W_x).
\end{align}
As $\|N^{-1/2}\sqrt{N-\cN_+}\|_{\cH_+^{\leq N}\to \cH_+^{\leq N}}\leq 1$, this operator is bounded relative to $\cN_+$ by Lemma~\ref{lem:quadratic} and~\eqref{eq:V ell2}, which gives the desired bound.
The expression for $U_N H_N U_N^*$ additionally contains the term $N^{-1/2} d\Gamma(QW_xQ)$ from the interaction (see Lemma~\ref{lem:W trafo}), which commutes with $\cN_+$. Additionally, there are the terms from the boson interaction (see Lemma~\ref{lem:V trafo})
\begin{align}
 \frac{2}{N}&\sum_{j,k, \ell=1}^\infty \Big(V_{jk\ell0} a^*_j \sqrt{N-\cN_+}a^*_k  a_\ell 
 + V_{0jk\ell} a^*_j  a_k \sqrt{N-\cN_+}a_\ell \Big),\\
 \frac{1}{N}&\sum_{j,k, \ell,m=1}^\infty V_{jk\ell m} a^*_j a^*_k a_\ell a_m.
\end{align}
The latter also commutes with $\cN_+$, so it remains to bound the commutator of $\cN_+$ with the first line.
We have from the canonical commutation relations (for the first term -- the second one yields minus the adjoint)
\begin{align}\label{eq:cubic comm}
 \Big[\cN_+, \frac{2}{N}V_{jk\ell0} a^*_j \sqrt{N-\cN_+}a^*_k  a_\ell   \Big] 
 &=-\frac{2}{N}\sum_{j,k, \ell=1}^\infty V_{jk\ell0} a^*_j \sqrt{N-\cN_+}a^*_k  a_\ell.
 \end{align}
In order to bound this operator, let $\Psi, \Xi\in \cH_+^{\leq N}$, and rewrite
%(with $V_y(\cdot):=V(y-\cdot)$, cf.~\eqref{eq:W recompose}) 
\begin{align}
 &  \sum_{j,k,\ell=1}^\infty
 \big\langle \Psi, V_{jk\ell0} a_j^* \sqrt{N-\cN_+} a_k^*  a_\ell \Xi \big\rangle 
 \notag \\
 &= \int_{\left(T^d\right)^2}\sum_{j,k,\ell=1}^\infty \bar\phi_j (y) \phi_k(y^\prime) \big\langle \sqrt{N-\cN_+} a_j a_k\Psi, V(y-y^\prime)\phi_\ell (y) a_\ell \Xi \big\rangle  dy dy^\prime. 
 %
 %&= \int_{T^d} \Big\langle \sqrt{N-\cN_+}\sum_{j=1}^\infty \phi_j (y) a_j \Psi,  d\Gamma(Q V_y Q)\Xi \Big\rangle  dy.
\end{align}
By the Cauchy-Schwarz inequality and the fact that $\|N^{-1/2}\sqrt{N-\cN_+}\|_{\cH_+^{\leq N}\to \cH_+^{\leq N}}\leq 1$ we thus have
 \begin{align}\label{eq:V cubic bound}
  \frac{1}{N} \Big| \sum_{j,k,\ell=1}^\infty
 \big\langle \Psi, V_{jk\ell0} a_j^* \sqrt{N-\cN_+} a_k^*  a_\ell \Xi \big\rangle \Big| 
\leq & \bigg(\int_{\left(T^d\right)^2}  \Big\| \sum_{j,k=1}^\infty \phi_j (y) \phi_k (y^\prime)a_ja_k  \Psi\Big\|^2 dydy^\prime\bigg)^{1/2} \notag  \\ 
&\left(\int_{\left(T^d\right)^2} \Big\| N^{-1} \sum_{\ell=1}^\infty   V(y-y^\prime)\phi_\ell (y) a_\ell \Xi\Big\|^2 dydy^\prime \right)^{1/2}.
 \end{align}
Now
\begin{align}\label{erste}
&N^{-1}\int_{\left(T^d\right)^2}\Big\| \sum_{\ell=1}^\infty  V(y-y^\prime)\phi_\ell (y) a_\ell \Xi\Big\|^2 dydy^\prime\nonumber\\
&\leq \sup_{y\in T^d}\left\{\int_{ T^d }V^2(y-y^\prime)dy^\prime\right\} N^{-1} \int_{T^d} \Big\|\sum_{\ell=1}^\infty  \phi_\ell (y) a_\ell \Xi\Big\|^2 dy \nonumber\\
%
%&=N^{-1}\int_{\left(T^d\right)^2}\sum_{\ell,m=1}^\infty\Big\langle V(y-y^\prime)\phi_\ell (y) a_\ell \Xi,V(y-y^\prime)\phi_m (y) a_m \Xi\Big\rangle dy dy'\nonumber\\
%
&= \sup_{y\in T^d}\left\{\int_{ T^d }V^2(y-y^\prime)dy^\prime\right\} N^{-1} \int_{ T^d}
\sum_{\ell,m=1}^\infty\Big\langle\phi_\ell (y) a_\ell \Xi,\phi_m (y) a_m \Xi\Big\rangle dy\nonumber\\
&\leq  N^{-1} \|V\|_{L^2(T^d) }^2 \sum_{\ell=1}^\infty\Big\langle a_\ell \Xi, a_\ell\Xi\Big\rangle \leq \|V\|_{L^2(T^d) }^2 \|\Xi\|^2
\end{align}
% \begin{align}\label{erste}
% &N^{-1}\int_{\left(T^d\right)^2}\Big\| \sum_{\ell=1}^\infty  V(y-y^\prime)\phi_\ell (y) a_\ell \Xi\Big\|^2 dydy^\prime\nonumber\\
% %
% &=N^{-1}\int_{\left(T^d\right)^2}\sum_{\ell,m=1}^\infty\Big\langle V(y-y^\prime)\phi_\ell (y) a_\ell \Xi,V(y-y^\prime)\phi_m (y) a_m \Xi\Big\rangle dy dy'\nonumber\\
% %
% &\leq \sup_{y\in T^d}\left\{\int_{ T^d }V^2(y-y^\prime)dy^\prime\right\} N^{-1} \int_{ T^d}
% \sum_{\ell,m=1}^\infty\Big\langle\phi_\ell (y) a_\ell \Xi,\phi_m (y) a_m \Xi\Big\rangle dy\nonumber\\
% %
% &\leq  N^{-1} \|V\|_{L^2(T^d) }^2 \sum_{\ell=1}^\infty\Big\langle a_\ell \Xi, a_\ell\Xi\Big\rangle \leq \|V\|_{L^2(T^d) }^2 \|\Xi\|^2
% \end{align}
where we used that the $\left(\phi_n\right)_{n\in\mathbb{N}}$ form an ONB and that $\sum_{k\in \N} a_k^*a_k=\cN_+$.
Similarly,
\begin{align}
&\int_{\left(T^d\right)^2}  \Big\| \sum_{j,k=1}^\infty \phi_j (y) \phi_k (y^\prime)a_ja_k  \Psi\Big\|^2 dydy^\prime\nonumber\notag\\
&=\int_{\left(T^d\right)^2}  \Big\langle \sum_{j,k,\ell,m=1}^\infty \phi_j (y) \phi_k (y^\prime)a_ja_k  \Psi,\phi_\ell (y) \phi_m (y^\prime)a_\ell a_m  \Psi\Big\rangle dydy^\prime\notag\\
&=  \Big\langle \sum_{j,k=1}^\infty a_ja_k  \Psi,a_ja_k  \Psi\Big\rangle, \label{eq:phi-aa}
%\leq N^{-1} \Big\| \cN_+  \Psi\Big\|_2^2\;. 
\end{align}
and we have
\begin{align}\label{eq:aaaa}
	\sum_{j,k=1}^\infty\langle a_j a_ k\Psi, a_j a_ k\Psi \rangle \notag
	%\sum_{j,k=1}^\infty \| a_j a_k  \Psi \|^2 
	%&=  \sum_{j,k=1}^\infty\langle a_j a_ k\Psi, a_j a_ k\Psi \rangle \notag\\
	%
	&=  \sum_{j,k=1}^\infty\langle \Psi, (a_j^* a_j a_k^* a_ k + a_j^* [a_k^*, a_j] a_k)\Psi \rangle \notag\\
	&= \langle\Psi, \cN_+(\cN_+-1)\Psi\rangle \leq  \|\cN_+\Psi\|^2.
\end{align}
% Hence for $\Psi \in \cH_+^{\leq N}$ it follows that
% \begin{equation}
% 	\int_{\left(T^d\right)^2} N^{-1} \Big\| \sum_{j,k=1}^\infty \phi_j (y) \phi_k (y^\prime)a_ja_k  \Psi\Big\|^2 dydy^\prime\leq \Big\langle  \Psi, \cN_+ \Psi\Big\rangle.
% \end{equation}
% 
This implies that the commutator~\eqref{eq:cubic comm} is $\cN_+$-bounded uniformly in $N$, so, by our earlier reasoning, there is a constant $C$ such that
\begin{equation*}
 \frac{d \alpha}{dt}(t) \leq \Big|\frac{d \alpha}{dt}(t)\Big| \leq C \alpha(t). 
\end{equation*}
Thus by Grönwall's Lemma $\alpha(t) \leq \alpha(0)e^{Ct}$, which proves the claim for $\alpha$, $\alpha^\aux$.
%if $\Phi^{\leq N} \in U_N D(H_N)=D(H^\aux_N)$. This extends to any $\Phi^{\leq N}\in \cH_+^{\leq N}$ by  density of $D(H^\aux_N)$ and boundedness of $\cN_+$ on this space.

The proof for $\alpha^\BF$ follows from the same argument, since $[\cN_+, H^\BF]$ is $\cN_+$-bounded by the identities~\eqref{eq:quad comm+},~\eqref{eq:quad comm-} and Lemma~\ref{lem:quadratic}.
Taking the growth rate $v$ to be the maximum of the constants for $\alpha$, $\alpha^{\BF}$ proves the claim.
\end{proof}

\subsection{Proof of Theorem \ref{thm:quant}}

\begin{lemma}
	\label{lem:duhamel}
	Let $\cH$ be a Hilbert space and $D\subset \cH$ a dense subspace. If $(H_1, D)$,  $(H_2, D)$ are self-adjoint operators, then for $\Psi_j\in D$, $j=1,2$, and $t\geq 0$
	\begin{equation*}
	\left\|	 e^{-iH_1t}\Psi_1-e^{-iH_2 t}\Psi_2\right\|_\cH^2 \leq   \|\Psi_1-\Psi_2\|_\cH^2 + 2\int_{0}^t\Big|  \bra e^{-iH_1s}\Psi_1,(H_1-H_2)e^{-iH_2 2}\Psi_2\ket\Big|ds.
	\end{equation*}
\end{lemma}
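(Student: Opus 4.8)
The plan is to reduce everything to a one-line differential inequality for the squared distance. Write $\Phi_1(s):=e^{-iH_1 s}\Psi_1$ and $\Phi_2(s):=e^{-iH_2 s}\Psi_2$, and set $f(s):=\|\Phi_1(s)-\Phi_2(s)\|_\cH^2$. Expanding the norm gives $f(s)=\|\Phi_1(s)\|^2+\|\Phi_2(s)\|^2-2\re\langle\Phi_1(s),\Phi_2(s)\rangle$. Since $e^{-iH_1 s}$ and $e^{-iH_2 s}$ are unitary, the first two terms equal $\|\Psi_1\|^2$ and $\|\Psi_2\|^2$ and are constant in $s$; hence only the cross term contributes to $f'(s)$.

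First I would invoke Stone's theorem: because $\Psi_j\in D=D(H_j)$, the curve $\Phi_j(\cdot)$ is strongly continuously differentiable with $\tfrac{d}{ds}\Phi_j(s)=-iH_j\Phi_j(s)$, and crucially $\Phi_j(s)\in D(H_j)=D$ for every $s$. Differentiating the cross term by the product rule then yields $\tfrac{d}{ds}\langle\Phi_1,\Phi_2\rangle=\langle -iH_1\Phi_1,\Phi_2\rangle+\langle\Phi_1,-iH_2\Phi_2\rangle=i\langle H_1\Phi_1,\Phi_2\rangle-i\langle\Phi_1,H_2\Phi_2\rangle$. Here is the one spot where the hypotheses are genuinely used: since $D(H_1)=D(H_2)=D$, the vector $\Phi_2(s)$ lies in $D(H_1)$, so symmetry of the self-adjoint $H_1$ lets me move the operator across the inner product, $\langle H_1\Phi_1,\Phi_2\rangle=\langle\Phi_1,H_1\Phi_2\rangle$. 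The two terms then combine into $i\langle\Phi_1,(H_1-H_2)\Phi_2\rangle$.

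Consequently $f'(s)=-2\re\big(i\langle\Phi_1(s),(H_1-H_2)\Phi_2(s)\rangle\big)=2\im\langle\Phi_1(s),(H_1-H_2)\Phi_2(s)\rangle$, so that $|f'(s)|\le 2\,|\langle\Phi_1(s),(H_1-H_2)\Phi_2(s)\rangle|$. Integrating from $0$ to $t$ and using $f(0)=\|\Psi_1-\Psi_2\|^2$ gives exactly the asserted inequality (reading the integrand as $\langle e^{-iH_1 s}\Psi_1,(H_1-H_2)e^{-iH_2 s}\Psi_2\rangle$).

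The computation is entirely routine; the only point requiring care---and the reason the common-domain assumption appears in the statement---is the domain bookkeeping in the middle step. One must ensure that $e^{-iH_2 s}\Psi_2$ remains in $D(H_1)$ so that $H_1$ may be transferred onto it. This is immediate here, since both generators have the identical domain $D$ and each unitary group preserves the domain of its own generator, but it is exactly the hypothesis that would fail for a generic pair of self-adjoint operators and so must not be skipped.
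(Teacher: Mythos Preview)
Your proof is correct and follows exactly the same approach as the paper: differentiate the squared norm, obtain $f'(s)=-2\Re\langle e^{-iH_1s}\Psi_1,i(H_1-H_2)e^{-iH_2s}\Psi_2\rangle$, and integrate. You simply spell out in more detail the domain justification (why $e^{-iH_2 s}\Psi_2\in D(H_1)$) that the paper leaves implicit.
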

\begin{proof}
%First assume that $\Psi_j\in D_j$. Then we have, as a sesquilinear map on $D(|H_1|^{1/2}) \times D(|H_2|^{1/2})$,
We have
\begin{align}
\frac{d}{dt}\left\|	 e^{-iH_1t}\Psi_1-e^{-iH_2 t}\Psi_2\right\|_\cH^2&=-2 \Re\langle e^{-iH_1t}\Psi_1,i(H_1-H_2) e^{-iH_2t}\Psi_2 \rangle, %\notag \\
\end{align}
so the claim follows from the fundamental theorem of calculus.
%By continuity of the unitary group on $\cH$ and $D(|H_j|^{1/2})$ the statement extends to $\Psi_j\in D(|H_j|^{1/2})$ by density.
\end{proof}

\begin{lemma}\label{lem:aux}
Assume the hypothesis of Theorem~\ref{thm:quant} and let $\Phi\in D(H_0)$, $\Psi_N=U_N^* \Phi^{\leq N}$. 
  There exists $K>0$ such that for all $N\in \N$ and $t\in \R$
	\begin{equation*}
	\big \| e^{-i H_N t} \Psi_N - U_N^* e^{-i H^\aux_N t} U_N \Psi_N \big\|\leq K N^{-1/4} e^{v |t|},
	\end{equation*}
	where $v$ is the constant of Lemma~\ref{lem:alpha}.
\end{lemma}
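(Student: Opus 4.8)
The plan is to run a Duhamel comparison of the two flows directly on $\cH_+^{\leq N}$. Since $U_N$ is unitary onto $\cH_+^{\leq N}$ and $U_N\Psi_N=\Phi^{\leq N}$, the quantity to estimate equals $\|e^{-iH_1t}\Phi^{\leq N}-e^{-iH_2t}\Phi^{\leq N}\|_{\cH_+}$ with $H_1:=U_NH_NU_N^*$ and $H_2:=H_N^{\aux}$, both self-adjoint on the common domain $U_ND(H_N)$ (Lemma~\ref{lem:dom aux}). Applying Lemma~\ref{lem:duhamel} with $\Psi_1=\Psi_2=\Phi^{\leq N}$ makes the initial term vanish, so that
\[
\|e^{-iH_1t}\Phi^{\leq N}-e^{-iH_2t}\Phi^{\leq N}\|^2\leq 2\int_0^t\big|\langle\psi_1(s),(H_1-H_2)\psi_2(s)\rangle\big|\,ds,
\]
where $\psi_1(s)=U_Ne^{-iH_Ns}U_N^*\Phi^{\leq N}$ and $\psi_2(s)=e^{-iH_N^{\aux}s}\Phi^{\leq N}$ (the case $t<0$ is symmetric). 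Everything reduces to bounding the integrand.

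By Lemmas~\ref{lem:V trafo} and~\ref{lem:W trafo}, $H_1-H_2$ is precisely the collection of terms discarded in the passage to $H_N^{\aux}$: the two cubic terms $\tfrac2N\sum(V_{jk\ell0}a_j^*\sqrt{N-\cN_+}a_k^*a_\ell+V_{0jk\ell}a_j^*a_k\sqrt{N-\cN_+}a_\ell)$, the quartic term $\tfrac1N\sum V_{jk\ell m}a_j^*a_k^*a_\ell a_m$, and the number-preserving interaction $\tfrac1{\sqrt N}d\Gamma(QW_xQ)$. The cubic terms dominate and fix the rate. I would bound them by re-running the computation~\eqref{eq:V cubic bound}--\eqref{eq:aaaa} of Lemma~\ref{lem:alpha}, but now retaining the factor of $N^{-1/2}$ that was thrown away there: in~\eqref{erste} one keeps $N^{-1}\langle\Xi,\cN_+\Xi\rangle=N^{-1}\|\cN_+^{1/2}\Xi\|^2$ rather than using $\cN_+\leq N$, which gives $\tfrac1N|\langle\psi_1,\sum V_{jk\ell0}a_j^*\sqrt{N-\cN_+}a_k^*a_\ell\,\psi_2\rangle|\leq N^{-1/2}\|V\|_{L^2}\|\cN_+\psi_1\|\,\|\cN_+^{1/2}\psi_2\|$, and similarly for the adjoint term. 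The quartic term is bounded analogously by $\tfrac CN\|V\|_{L^2}\|\cN_+\psi_1\|\,\|\cN_+\psi_2\|$. By Lemma~\ref{lem:alpha}, $\|\cN_+\psi_1(s)\|=\sqrt{\alpha(s)}\leq\sqrt{\alpha(0)}\,e^{vs/2}$ and likewise for $\psi_2$ with $\alpha^{\aux}$, so the cubic part of the integrand is $O(N^{-1/2}\alpha(0)e^{vs})$ and the quartic part is $O(N^{-1}\alpha(0)e^{vs})$.

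The main obstacle is the term $\tfrac1{\sqrt N}d\Gamma(QW_xQ)$. In contrast to the linear terms $a^\sharp(W_x)$, it is \emph{not} $\cN_+$-bounded (its diagonal sum $\sum_k\|W_x\phi_k\|^2$ diverges for $W\in L^2$), so the bounds of Lemma~\ref{lem:alpha} do not apply. The remedy is to use that $W$ is infinitesimally $-\Delta$-bounded \emph{uniformly in the tracer position $x$}: each summand $W(x-y_n)$ is form-bounded by the kinetic energy $-\Delta_{y_n}$ of the corresponding excitation alone (no tracer Laplacian enters), whence $\pm d\Gamma(QW_xQ)\leq\eps\,d\Gamma(-\Delta)+C_\eps\cN_+$ for every $\eps>0$. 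Cauchy--Schwarz for this relative bound then controls $\tfrac1{\sqrt N}|\langle\psi_1,d\Gamma(QW_xQ)\psi_2\rangle|$ in terms of $\langle d\Gamma(-\Delta)\rangle$ and $\langle\cN_+\rangle$ on the two states. The missing ingredient is therefore a uniform-in-$N$ bound on the kinetic energy along the flows, which I would extract from energy conservation: $\langle\psi_1(s),H_1\psi_1(s)\rangle=\langle\Phi^{\leq N},H_1\Phi^{\leq N}\rangle$ is constant and bounded uniformly in $N$ (because $\Phi\in D(H_0)$ and the interaction terms of $H_1$ are relatively $H_0$-bounded, as in the self-adjointness argument of Lemma~\ref{lem:dom aux} and Proposition~\ref{prop:dom BF}, using the infinitesimal $-\Delta$-boundedness of $V,W$). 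Combining the resulting coercivity $\tfrac12 d\Gamma(-\Delta)\leq H_1+C\,(\cN_++1)^2$ with Lemma~\ref{lem:alpha} yields $\langle d\Gamma(-\Delta)\rangle_{\psi_1(s)}\leq C\alpha(0)e^{vs}$, and the same for $\psi_2$ via conservation of $H_2$; hence this term is again $O(N^{-1/2}\alpha(0)e^{vs})$.

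Collecting the three contributions, the integrand is $O(N^{-1/2}\alpha(0)e^{vs})$, so integrating and taking square roots bounds the left-hand side by $KN^{-1/4}e^{v|t|}$, where $K$ depends on $\Phi$ only through $\alpha(0)=\|(\cN_++1)\Phi^{\leq N}\|^2\leq\|(\cN_++1)\Phi\|^2$ and the conserved energies --- both finite since $d\Gamma(-\Delta)\geq4\pi^2\cN_+$ gives $D(H_0)\subset D(\cN_+)$. The rate $N^{-1/4}$ is produced entirely by the cubic terms; the quartic term contributes only at order $N^{-1/2}$.
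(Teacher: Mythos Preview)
Your Duhamel setup and the cubic estimate are exactly as in the paper. The genuine gap is the quartic term: the bound $\tfrac1N|\langle\psi_1,\sum V_{jk\ell m}a_j^*a_k^*a_\ell a_m\,\psi_2\rangle|\leq\tfrac CN\|V\|_{L^2}\|\cN_+\psi_1\|\|\cN_+\psi_2\|$ is false for $V\in L^2\setminus L^\infty$. If you mimic the cubic argument and place $V$ on one factor in the Cauchy--Schwarz split, the bad side becomes $\int|V(y-y')|^2\|a_ya_{y'}\psi\|^2\,dy\,dy'$, which in the $n$-excitation sector is $n(n-1)\int|V(y_1-y_2)|^2|\psi^{(n)}|^2$ and is \emph{not} controlled by $\|\cN_+\psi\|^2$ without further regularity of $\psi$ (the paper flags this explicitly). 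Your own diagnosis of $d\Gamma(QW_xQ)$---``not $\cN_+$-bounded since $\sum_k\|W_x\phi_k\|^2$ diverges''---applies verbatim to the two-body operator $V(y_1-y_2)$; the quartic term has the same obstruction, and the form bound $\pm\text{quartic}\leq\eps\,d\Gamma(-\Delta)+C_\eps\cN_+$ carries no $N^{-1/2}$, so it cannot supply the smallness either.

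The paper's remedy is to insert $\Delta_y^{-1}\Delta_y$ before Cauchy--Schwarz: one factor then contains the bounded operator $\Delta^{-1}V$ and yields $CN^{-1/2}\|\cN_+\psi_1\|$, while the other becomes $\|d\Gamma(-\Delta)\psi_2(s)\|$ (after using $\cN_+\leq N$ once). Crucially, this is an \emph{operator}-norm control of the kinetic energy on the $H_N^{\aux}$ side, obtained not from a form-coercivity but from $\|H_0\psi_2(s)\|\leq\|H_N^{\aux}\Phi^{\leq N}\|+\|(H_N^{\aux}-H_0)\psi_2(s)\|$ together with the fact that $H_N^{\aux}-H_0$ is $\cN_+$-bounded uniformly in $N$ (Lemma~\ref{lem:quadratic}). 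This is where $\Phi\in D(H_0)$ is really used, and it gives the $N^{-1/2}$ for the quartic term. The paper handles $\tfrac1{\sqrt N}d\Gamma(QW_xQ)$ by the same $\Delta^{-1}W_x$ trick, re-using the bound on $\|d\Gamma(-\Delta)\psi_2(s)\|$. Your energy-conservation route for the $W$-term does work as an alternative (the form coercivity $\tfrac12 d\Gamma(-\Delta)\leq H_j+C(\cN_++1)^2$ is correct, uniformly in $N$), but it does not save the quartic estimate.
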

\begin{proof}
 By Lemma~\ref{lem:dom aux}, we have $D(H_0)\subset D(H^\aux)$ and thus $\Psi_N\in D(H_N)= U_N^*D(H^\aux)$. Using Lemma \ref{lem:duhamel} it then follows that
\begin{align}
\big \| e^{-i H_N t}& \Psi_N - U_N^* e^{-i H^\aux_N t} U_N \Psi_N \big\|^2 \notag\\
\leq & 2 \int_{0}^t\big|  \bra e^{-i H_N t} \Psi_N,(H_N - U_N^* H^\aux U_N) U_N^* e^{-i H^\aux_N t} U_N \Psi_N\ket\big|ds \notag \\
= & 2 \int_{0}^t\big|  \bra U_N e^{-i H_N t} \Psi_N,(U_N H_N U_N^* - H^\aux) e^{-i H^\aux_N t} \Phi^{\leq N}\ket\big|ds.
\label{eq:diffaux}
\end{align}
In view of Lemmas~\ref{lem:W trafo},~\ref{lem:V trafo}, we have 
\begin{align}
 U_N H_N U_N^* - H^\aux= &  \frac{1}{\sqrt{N}} d\Gamma(QW_xQ)\label{eq:H diffW}  \\
  &+ \frac{2}{N}\sum_{j,k, \ell=1}^\infty \Big(V_{jk\ell0} a^*_j \sqrt{N-\cN_+}a^*_k  a_\ell + V_{0jk\ell} a^*_j  a_k \sqrt{N-\cN_+}a_\ell \Big) \label{eq:H diff cubic}\\
&+ \frac{1}{N}\sum_{j,k, \ell,m=1}^\infty V_{jk\ell m} a^*_j a^*_k a_\ell a_m. \label{eq:H diff quartic}
\end{align}

 The first term in~\eqref{eq:H diff cubic} satisfies, by the reasoning of~\eqref{eq:V cubic bound}, \eqref{erste} and \eqref{eq:phi-aa}
\begin{align}
 \frac{2}{N}&\Big|\sum_{j,k, \ell=1}^\infty \Big\langle U_N e^{-i H_N t}\Psi_N, V_{jk\ell0} a^*_j \sqrt{N-\cN_+} a_k^* a_\ell e^{-i H^\aux_N t} \Phi^{\leq N} \Big\rangle\Big| \notag\\
 &\leq \frac{2}{\sqrt  N}\|V\|_{L^2}\|\sqrt{\cN_+}U_N e^{-i H_N t}\Psi_N\|\|\cN_+ e^{-i H^\aux_N t} \Phi^{\leq N}\|\notag\\
%  %
 &\leq \frac{2}{\sqrt  N} \|V\|_{L^2}\big(\alpha(t)\alpha^\aux(t) \big)^{1/2}.
\end{align}
The adjoint term from~\eqref{eq:H diff cubic} satisfies the same bound.

The quartic term~\eqref{eq:H diff quartic} will require some regularity of $\Phi$ (unless $V\in L^\infty$). First, we may expand
\begin{align}%\label{eq:quartic}
  & \sum_{j,k, \ell,m=1}^\infty \Big\langle U_N e^{-i H_N t}\Psi_N, V_{jk\ell m} a^*_j a^*_k a_\ell a_m e^{-i H^\aux_N t} \Phi^{\leq N} \Big\rangle \notag \\
  &=\int_{\left(T^d\right)^2} \sum_{j,k, \ell,m=1}^\infty \Big\langle \phi_j(y)\phi_k(y^\prime)a_j a_kU_N e^{-i H_N t}\Psi_N, V(y-y^\prime)  \phi_\ell(y)\phi_m(y^\prime)a_\ell  a_m e^{-i H^\aux_N t} \Phi^{\leq N} \Big\rangle \notag dy dy^\prime.\notag \\
\end{align}
Then, noting that the Laplacian is self-adjoint and invertible on $\fH_+$, multiplying with the identity operator in the form $\Delta_y^{-1}\Delta_y$ and using the Cauchy-Schwarz inequality gives
\begin{align}
  \frac{1}{N}& \Big|\sum_{j,k, \ell,m=1}^\infty \Big\langle U_N e^{-i H_N t}\Psi_N, V_{jk\ell m} a^*_j a^*_k a_\ell a_m e^{-i H^\aux_N t} \Phi^{\leq N} \Big\rangle \Big|\notag \\
  & \leq \left(\int_{\left(T^d\right)^2}  \frac{1}{N}\Big\|\Delta_y^{-1}V(y-y^\prime)\sum_{j,k=1}^\infty\phi_j(y)\phi_k(y^\prime)a_j a_k U_N e^{-i H_N t}\Psi_N\Big\|^2 dy dy^\prime\right)^{1/2} \label{eq:quartic-t1}  \\
  &\qquad \times
  \left(\int_{\left(T^d\right)^2} \frac{1}{N}  \Big\| \Delta_y \sum_{\ell,m=1}^\infty  \phi_\ell(y)\phi_m(y^\prime)a_\ell  a_m  e^{-i H^\aux_N t} \Phi^{\leq N}\Big\|^2 dydy^\prime\right)^{1/2}. \label{eq:quartic-t2} 
\end{align}
Using~\eqref{eq:phi-aa},~\eqref{eq:aaaa}, we have
\begin{equation*}
 \eqref{eq:quartic-t1} \leq \|\Delta^{-1}V\|_{L^2\to L^2} N^{-1/2} \|\cN_+ U_N e^{-i H_N t}\Psi_N\| \leq C N^{-1/2}\alpha(t)^{1/2},
\end{equation*}
since $\Delta^{-1}V$ is a bounded operator by hypothesis.

% For the first term we use Fourier-transformation in the $y$ variable, H\"older- and Young-inequality and 
% the abbreviation $\chi=\sum_{j,k=1}^\infty\phi_j(y)\phi_k(y^\prime)a_j a_k U_N e^{-i H_N t}\Psi_N$. Note, that all Fourier transforms may depend on other variables in particular on $y^\prime$.
% \begin{align*}&\int_{\left(T^d\right)^2} \left\|(1-\Delta_y)^{-1}\left(V(y-y^\prime)\chi\right)\right\|^2dy dy^\prime\leq  \int_{T^d}\left\|(1+k^2)^{-1}\left(\widehat{V}\star \widehat{\chi}\right)\right\|_{l^2}^2 d y^\prime\\\leq&C\int_{T^d}\left\|\left(\widehat{V}\star \widehat{\chi}\right)\right\|_{l^\infty}^2 d y^\prime \leq C \int_{T^d} \left\|\widehat{V}\right\|^2_{l^2} \left\|\widehat{\chi}\right\|^2_{l^2}d y^\prime\leq C\sup_{y^\prime\in T^d}\left\{\left\|\widehat{V}\right\|^2_{l^2}\right\}\int_{T^d}   \left\|\widehat{\chi}\right\|^2_{l^2}d y^\prime=C\|V\|^2\|\chi\|^2\;.\end{align*}
% 
%  Hence, assuming $V\in L^2$, the first factor in \eqref{eq:quartic-t2} is bounded by $C N^{-1/2}\alpha(t)^{1/2}$ (see \eqref{eq:phi-aa}). 
For~\eqref{eq:quartic-t2} we use that $\|\cN_+ e^{-i H^\aux_N t} \Phi^{\leq N}\| \leq N  \| e^{-i H^\aux_N t} \Phi^{\leq N}\|$ to obtain, similarly to~\eqref{eq:phi-aa},
\begin{align}
  %&\frac1N\int_{\left(T^d\right)^2}  \Big\|(-\Delta_y) \sum_{\ell,m=1}^\infty  \phi_\ell(y)\phi_m(y^\prime)a_\ell  a_m  e^{-i H^\aux_N t} \Phi^{\leq N}\Big\|^2_{\fH_+^{\otimes_s (n-2)}} dydy^\prime \notag \\
  \eqref{eq:quartic-t2} &\leq 
 \left(\int_{T^d}  \sum_{\ell,m=1}^\infty  \left\langle  \Delta_y \sum_{\ell=1}^\infty \phi_\ell(y) a_\ell e^{-i H^\aux_N t} \Phi^{\leq N} ,   \Delta_y \sum_{m=1}^\infty \phi_m(y) a_m   e^{-i H^\aux_N t} \Phi^{\leq N}\right\rangle dy\right)^{1/2} \notag \\
  %&=\frac1N\langle  e^{-i H^\aux_N t} \Phi^{\leq N},d\Gamma((-\Delta)Q(-\Delta))e^{-i H^\aux_N t} \Phi^{\leq N} \rangle 
  %
  &= \left\langle e^{-i H^\aux_N t} \Phi^{\leq N}, d\Gamma\big(\Delta^2\big) e^{-i H^\aux_N t} \Phi^{\leq N} \right\rangle^{1/2}\notag \\
  &\leq \|d\Gamma(-\Delta)e^{-i H^\aux_N t} \Phi^{\leq N}\|\;.\label{eq:quartic bound}
\end{align}
Now 
\begin{equation*}
 \|d\Gamma(-\Delta)e^{-i H^\aux_N t} \Phi^{\leq N}\| 
 \leq \|H_0 e^{-i H^\aux_N t} \Phi^{\leq N}\|
 \leq \|e^{-i H^\aux_N t} H^\aux_N\Phi^{\leq N}\| 
 + \|(H^\aux_N-H_0)e^{-i H^\aux_N t} \Phi^{\leq N}\|,
\end{equation*}
and the difference $H^\aux_N-H_0$ is a quadratic operator that is $\cN_+$-bounded uniformly in $N$ by Lemma~\ref{lem:quadratic} and~\eqref{eq:V ell2}. 
We thus have the following bound on the quartic term~\eqref{eq:H diff quartic}
\begin{align*}
 \frac{2}{N}\bigg|  \sum_{j,k,\ell,m=1}^\infty \bra U_N e^{-i H_N t} \Psi_N,V_{jk\ell m} a^*_j a^*_k a_\ell a_me^{-i H^\aux_N t} \Phi^{\leq N}\ket\bigg| \leq C N^{-1/2} \sqrt{\alpha(t)}\Big(\| H^\aux_N\Phi^{\leq N}\|+\sqrt{\alpha^\aux(t)}  \Big).
\end{align*}

By a similar argument (see also~\eqref{eq:dGamma bound}), we have the bound
\begin{align*}
 &\frac{1}{\sqrt N}\left|\left\langle U_N e^{-i H_N t} \Psi_N, d\Gamma (QW_x Q) e^{-i H^\aux_N t} \Phi^{\leq N} \right\rangle\right| \\
 &\leq \sup_{x\in T^d} \| \Delta^{-1} W_x \|_{L^2\to L^2} N^{-1/2} \|\cN_+^{1/2} U_N e^{-i H_N t} \Psi_N\| \|d\Gamma(-\Delta)e^{-i H^\aux_N t} \Phi^{\leq N}\| \\
 &\leq C N^{-1/2}\sqrt{\alpha(t)}\Big(\| H^\aux_N\Phi^{\leq N}\|+\sqrt{\alpha^\aux(t)}  \Big).
\end{align*}

Using the assumption that $\Phi\in D(H_0)\subset D(H^\aux)$, Lemma~\ref{lem:alpha} and integrating in~\eqref{eq:diffaux} yields the claim. \eed
\end{proof}

To complete the proof of Theorem~\ref{thm:quant}, it remains to prove the following Lemma on the approximation of $e^{-iH^\aux _Nt}$ by $e^{-i H^\BF t}$, which essentially amounts to removing the restriction to $\cH_+^{\leq N}$.

\begin{lemma}\label{lem:aux-BF}
  Assume the hypothesis of Theorem~\ref{thm:quant} and let $\Phi\in D(H_0)$. There exists a constant $K$ such that for all $N\in \N$ and $t\in \R$
	\begin{equation*}
	\big \| e^{-i H_N^\aux t} \Phi^{\leq N} - e^{-i H^\BF t} \Phi \big\|\leq K e^{v|t|} N^{-1/4},
	\end{equation*}
	where $v$ is the constant of Lemma~\ref{lem:alpha}.
\end{lemma}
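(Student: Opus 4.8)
The plan is to compare the two evolutions by a Duhamel argument, in the same spirit as Lemma~\ref{lem:aux}, exploiting that $H^\BF$ is obtained from $H_N^\aux$ by replacing every factor $N^{-1/2}\sqrt{N-\cN_+}$ (equivalently $\sqrt{1-\cN_+/N}$, and $N^{-1}(N-\cN_+)=1-\cN_+/N$ in the $V_{j0k0}$ term) by $1$, and by extending the operator from $\cH_+^{\leq N}$ to all of $\cH_+$. Crucially, since the quartic terms were already dropped in passing from $U_NH_NU_N^*$ to $H_N^\aux$, the difference $H^\BF-H_N^\aux$ contains only terms that are linear or quadratic in $a^\#$, so — unlike in Lemma~\ref{lem:aux} — no $\Delta$-regularity is needed and $\cN_+$-bounds will suffice. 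Writing $\Phi_1(s)=e^{-iH^\BF s}\Phi$ and $\Phi_2(s)=e^{-iH_N^\aux s}\Phi^{\leq N}$, Lemma~\ref{lem:duhamel} gives
\[
\big\|\Phi_1(t)-\Phi_2(t)\big\|^2 \leq \|\Phi-\Phi^{\leq N}\|^2 + 2\int_0^t \big|\langle \Phi_1(s),(H^\BF-H_N^\aux)\Phi_2(s)\rangle\big|\,ds .
\]
For the domain hypothesis I would note that $\Phi\in D(H_0)$ and $H_0$ commutes with the projection onto $\cH_+^{\leq N}$, so $\Phi^{\leq N}\in D(H_0)\cap\cH_+^{\leq N}\subset D(H_N^\aux)$ by Lemma~\ref{lem:dom aux}; both $\Phi_1(s)$ and $\Phi_2(s)$ then remain in the domains where the explicit, relatively $\cN_+$-bounded difference $H^\BF-H_N^\aux$ acts, so the matrix element is well defined and Lemma~\ref{lem:duhamel} applies.

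The first term is harmless: since $\Phi\in D(H_0)\subset D(\cN_+)$, the spectral bound $\mathbb{1}_{\cN_+>N}\leq N^{-2}\cN_+^2$ yields $\|\Phi-\Phi^{\leq N}\|^2=\|\mathbb{1}_{\cN_+>N}\Phi\|^2\leq N^{-2}\|\cN_+\Phi\|^2$. The substance is the integrand. Acting on $\Phi_2(s)\in\cH_+^{\leq N}$, I would split $(H^\BF-H_N^\aux)\Phi_2(s)$ into (a) the component landing back in $\cH_+^{\leq N}$, which is a sum of the linear $a^\#(W_x)$ terms and the quadratic $V_{j0k0},V_{jk00},V_{00jk}$ terms with a factor $\sqrt{1-\cN_+/N}$ (resp.\ $1-\cN_+/N$) replaced by $1$, and (b) the component $\mathbb{1}_{\cN_+>N}H^\BF\Phi_2(s)$ produced solely by the number-raising pairing term $\sum V_{jk00}a_j^*a_k^*$ of $H^\BF$, which pushes mass above the truncation (recall $H_N^\aux$ preserves $\cH_+^{\leq N}$, so it contributes nothing there).

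For (a) the elementary inequality $|\sqrt{1-x}-1|\leq x$ on $[0,1]$ shows that each such difference carries an extra factor bounded by $\cN_+/N$. Pairing as in Lemma~\ref{lem:aux}, moving the creation operators onto $\Phi_1(s)$, and using the $\cN_+$-bounds of Lemma~\ref{lem:quadratic} together with the $\ell^2$-summability~\eqref{eq:V ell2} and the standard estimate $\|a^\#(W_x)\psi\|\leq\|W\|\,\|(\cN_++1)^{1/2}\psi\|$, every term in (a) is bounded by $CN^{-1}\|(\cN_++1)\Phi_1(s)\|\,\|(\cN_++1)\Phi_2(s)\|$. For (b) I would pair against $\mathbb{1}_{\cN_+>N}\Phi_1(s)$ and use, exactly as for the initial Duhamel term, that $\|\mathbb{1}_{\cN_+>N}\Phi_1(s)\|\leq N^{-1}\|\cN_+\Phi_1(s)\|$, while the remaining factor $\|\sum V_{jk00}a_j^*a_k^*\Phi_2(s)\|$ is $\cN_+$-bounded via~\eqref{eq:V ell2} and Lemma~\ref{lem:quadratic}; this again produces a contribution of size $CN^{-1}\|\cN_+\Phi_1(s)\|\,\|(\cN_++1)\Phi_2(s)\|$.

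Finally I would invoke Lemma~\ref{lem:alpha}, which gives $\|(\cN_++1)\Phi_1(s)\|^2=\alpha^\BF(s)\leq\alpha^\BF(0)e^{v|s|}$ and $\|(\cN_++1)\Phi_2(s)\|^2=\alpha^\aux(s)\leq\alpha^\aux(0)e^{v|s|}$, both finite at $s=0$ since $\Phi\in D(\cN_+)$. The integrand is therefore $\leq CN^{-1}e^{v|s|}$, the integral is $\leq CN^{-1}e^{v|t|}$, and combining with the $O(N^{-2})$ initial term yields $\|\Phi_1(t)-\Phi_2(t)\|^2\leq CN^{-1}e^{v|t|}$; taking square roots gives the claim (with room to spare, since $N^{-1/2}\leq N^{-1/4}$). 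I expect the only genuine subtlety to be step (b): the particle-number–nonpreserving part of $H^\BF$ escaping the truncated space $\cH_+^{\leq N}$ is exactly what distinguishes this lemma from a triviality, and its control rests on the a priori excitation bound of Lemma~\ref{lem:alpha} together with the regularity assumption $\Phi\in D(H_0)$.
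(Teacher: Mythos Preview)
Your argument is correct and follows essentially the same strategy as the paper's proof: Duhamel (Lemma~\ref{lem:duhamel}) plus the a~priori excitation bound of Lemma~\ref{lem:alpha}, with the difference $H^\BF-H_N^\aux$ controlled via the $\cN_+$-bounds of Lemma~\ref{lem:quadratic} and~\eqref{eq:V ell2}. Two small remarks. First, in step~(b) the overflow $\mathbb{1}_{\cN_+>N}H^\BF\Phi_2(s)$ is not produced \emph{solely} by $\sum V_{jk00}a_j^*a_k^*$: the linear creation term $a^*(W_x)$ of $H^\BF$ also raises $\cN_+$ by one and hence pushes the $n=N$ sector of $\Phi_2(s)$ into $\cN_+=N+1$; the bound you give covers this term just as well, so the argument is unaffected. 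Second, your choice to apply the operator difference to the truncated evolution $\Phi_2(s)\in\cH_+^{\leq N}$ (the paper applies it to $e^{-iH^\BF s}\Phi$) is slightly cleaner, and your use of $|\sqrt{1-x}-1|\leq x$ gives an integrand of order $N^{-1}$ rather than the paper's $N^{-1/2}$, so you in fact obtain the sharper rate $N^{-1/2}$ instead of the stated $N^{-1/4}$.
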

\begin{proof}
We have $\Phi \in D(H_0)\subset D(H^\BF)$ (see Proposition~\ref{prop:dom BF}), so $\Phi^{\leq N} \in D(H^\aux_N)$ by~Lemma~\ref{lem:dom aux}. 
 Note that 
\begin{equation}
 \|\Phi-\Phi^{\leq N}\|^2 = \sum_{n=N+1}^\infty \|\Phi^{(n)}\|^2 \leq (N+1)^{-2} \|\cN_+ \Phi\|^2,
\end{equation}
so, by Lemma~\ref{lem:duhamel}, we have
\begin{equation}
 \big \| e^{-i H_N^\aux t} \Phi^{\leq N} - e^{-i H^\BF t} \Phi \big\|^2 \leq (N+1)^{-2} \|\cN_+ \Phi\|^2
 + \int\limits_0^t \Big|\Big\langle e^{-i H_N^\aux s} \Phi^{\leq N}, (H_N^\aux - H^\BF)e^{-i H^\BF s} \Phi \Big\rangle\Big| ds.
\end{equation}
The difference of the operators can be written as
\begin{align}
 H_N^\aux - H^\BF
 =&  -\frac{2}{N}\sum_{j, k>1} V_{j0k0} a_j^*\cN_+a_k  \notag\\
&+ \frac{1}{N}\sum_{j, k>1} V_{jk00} a_j^*(\sqrt{N-\cN_+}-\sqrt{N}) a_k^*( \sqrt{N-\cN_+}-\sqrt N) \notag\\
&+\frac{1}{N}\sum_{j, k>1} V_{00jk}  (\sqrt{N-\cN_+}-\sqrt N)
a_k(\sqrt{N-\cN_+}-\sqrt N) a_k\notag \\
& + \frac{1}{\sqrt{N}} a^*(W_x)(\sqrt{N-\cN_+}-\sqrt N) +  \frac{1}{\sqrt{N}} (\sqrt{N-\cN_+}-\sqrt N)a(W_x)\notag.
\end{align}
Using that 
\begin{align}
 a_k^* \cN_+  &=  (\cN_+ -1)a_k^* \\
  a_k^*\sqrt{N-\cN_+}&=\sqrt{N+1-\cN_+} a_k^* \\
  a_k\sqrt{N-\cN_+} &= \sqrt{N-\cN_+-1} a_k
\end{align}
we can move all of the $\cN_+$-dependent factors to the left. The factors on the right are then terms that also appear in $H^\BF$ and are $\cN_+$-bounded by Lemma~\ref{lem:quadratic} and~\eqref{eq:V ell2}.
With
\begin{align}
 N^{-1}& \|(\sqrt{N+1-\cN_+}\sqrt{N+2-\cN_+} - N) \Psi\|^2 \notag\\
 & \leq  N^{-1}\big\langle ((N+2)^2 +N^2)  \Psi, \Psi \big \rangle 
 - 2  N^{-1}\Re \big\langle N(N+1-\cN_+)  \Psi, \Psi \big \rangle \notag\\
 &\leq 2 N^{-1} \big\langle (2N+16) \Psi, \Psi \big \rangle  + 2 \big\langle \Psi, \cN_+ \Psi \big\rangle \leq C (\|\Psi\| + \|\cN_+ \Psi\|)^2,
\end{align}
we then obtain
\begin{align}
 \Big|\Big\langle e^{-i H_N^\aux s} \Phi^{\leq N}, (H_N^\aux - H^\BF)e^{-i H^\BF s} \Phi \Big\rangle\Big| 
 \leq C N^{-1/2} \big(\alpha^\aux(s)\alpha^\BF(s)\big)^{1/2}
 \end{align}
with some constant $C>0$. Applying Lemma~\ref{lem:alpha} thus completes the proof.
\end{proof}

\appendix
\section{Self-adjointness of the Hamiltonians}

Here we prove the relevant self-adjointness and domain properties of $H^\aux$ and $H^\BF$, as well as a useful general Lemma.

\begin{lemma}\label{lem:quadratic}
 Let $(M_{jk})_{j,k\in \N}\in \ell^2(\N\times \N)$. Then for any $\Phi\in D(\cN_+)$
 \begin{align*}
  \Big\| \sum_{j,k\in \N} M_{j,k} a_j a_k  \Phi\Big\|_{\cF_+}& \leq \|M\|_{\ell^2} \|\cN_+ \Phi\|_{\cF_+} \\
   \Big\| \sum_{j,k\in \N} M_{j,k} a_j^* a_k^*  \Phi\Big\|_{\cF_+}& \leq \|M\|_{\ell^2} \|(\cN_+ +2) \Phi\|_{\cF_+}.
 \end{align*}
\end{lemma}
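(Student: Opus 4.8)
The plan is to treat the two inequalities separately, deriving the annihilation bound first by an elementary argument and then bootstrapping it to the creation bound via a commutator identity. Throughout I would first establish the estimates on the dense set of finite-particle vectors (finite linear combinations of occupation-number states), where all sums and rearrangements are manifestly legitimate, and then extend to all $\Phi\in D(\cN_+)$ by the bounds themselves together with density.

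For the first (annihilation) inequality, I would combine the triangle inequality with a Cauchy--Schwarz in the summation index: writing $b_{jk}:=a_ja_k\Phi$, one has $\|\sum_{j,k}M_{jk}a_ja_k\Phi\|\le\sum_{j,k}|M_{jk}|\,\|a_ja_k\Phi\|\le\|M\|_{\ell^2}\big(\sum_{j,k}\|a_ja_k\Phi\|^2\big)^{1/2}$. The remaining double sum is exactly the quantity computed in~\eqref{eq:aaaa}, namely $\sum_{j,k}\|a_ja_k\Phi\|^2=\langle\Phi,\cN_+(\cN_+-1)\Phi\rangle\le\|\cN_+\Phi\|^2$, which yields $\|\sum_{j,k}M_{jk}a_ja_k\Phi\|\le\|M\|_{\ell^2}\|\cN_+\Phi\|$ at once.

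For the second (creation) inequality, the naive repetition of the above \emph{fails}, since $\sum_{j,k}\|a_j^*a_k^*\Phi\|^2$ diverges (each creation operator contributes a divergent ``vacuum'' term $\sum_j 1$); this is the main obstacle. The fix is to pass to the adjoint. Setting $A:=\sum_{j,k}M_{jk}a_j^*a_k^*$, its adjoint $A^*=\sum_{j,k}\overline{M_{jk}}\,a_ka_j$ is of annihilation type with an $\ell^2$-matrix of the same norm $\|M\|_{\ell^2}$, so the first inequality already gives $\|A^*\Phi\|\le\|M\|_{\ell^2}\|\cN_+\Phi\|$. I would then use the identity $\|A\Phi\|^2=\langle\Phi,A^*A\Phi\rangle=\|A^*\Phi\|^2+\langle\Phi,[A^*,A]\Phi\rangle$ and evaluate the commutator by the canonical commutation relations. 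A short computation using $[a_ba_a,a_c^*a_d^*]=\delta_{ac}a_ba_d^*+\delta_{ad}a_ba_c^*+\delta_{bc}a_d^*a_a+\delta_{bd}a_c^*a_a$ and normal-ordering $a_ba_d^*=\delta_{bd}+a_d^*a_b$ shows that $[A^*,A]$ equals a constant multiple of the identity, of size at most $2\|M\|_{\ell^2}^2$, plus four number-conserving quadratic terms of the form $\sum_{p,q}K_{pq}a_p^*a_q$ whose kernels are entries of $M^*M$ and $MM^*$. Each quadratic term has expectation bounded by $\|K\|_{\mathrm{op}}\langle\Phi,\cN_+\Phi\rangle\le\|M\|_{\ell^2}^2\langle\Phi,\cN_+\Phi\rangle$.

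Assembling these estimates gives $\|A\Phi\|^2\le\|M\|_{\ell^2}^2\big(\langle\Phi,\cN_+^2\Phi\rangle+4\langle\Phi,\cN_+\Phi\rangle+2\|\Phi\|^2\big)$, and since $2\le 4$ this is dominated by $\|M\|_{\ell^2}^2\langle\Phi,(\cN_++2)^2\Phi\rangle=\|M\|_{\ell^2}^2\|(\cN_++2)\Phi\|^2$, which is the claimed bound. The only care needed, beyond the bookkeeping of the commutator, is the domain justification noted above; once the inequalities hold on finite-particle vectors they extend to $D(\cN_+)$ by continuity.
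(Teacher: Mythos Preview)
Your argument is correct. For the first inequality your proof is essentially the same as the paper's: both reduce, via Cauchy--Schwarz in the $(j,k)$ index, to the bound $\sum_{j,k}\|a_ja_k\Phi\|^2\le\|\cN_+\Phi\|^2$ of~\eqref{eq:aaaa} (the paper phrases it dually, pairing with a test vector $\Psi$, but this is cosmetic).

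For the second inequality the paper simply says ``the second can be proved in a similar way'', and the intended argument is shorter than yours: since $A=\sum M_{jk}a_j^*a_k^*$ raises particle number by exactly two, one can work sector by sector. The adjoint $A^*$ restricted to the $(n+2)$-particle sector satisfies, by the already-proved first inequality, $\|A^*\Psi^{(n+2)}\|\le\|M\|_{\ell^2}\|\cN_+\Psi^{(n+2)}\|=(n+2)\|M\|_{\ell^2}\|\Psi^{(n+2)}\|$, so by duality $\|A\Phi^{(n)}\|\le(n+2)\|M\|_{\ell^2}\|\Phi^{(n)}\|=\|M\|_{\ell^2}\|(\cN_++2)\Phi^{(n)}\|$; summing over $n$ (the images in different sectors are orthogonal) gives the claim. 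Your route via $\|A\Phi\|^2=\|A^*\Phi\|^2+\langle\Phi,[A^*,A]\Phi\rangle$ and an explicit evaluation of the commutator is also valid and yields exactly the same constant, but it requires more bookkeeping (normal-ordering four quadratic terms and bounding their kernels by $\|M\|_{\ell^2}^2$). The sector-by-sector duality buys brevity and makes transparent where the shift $\cN_+\mapsto\cN_++2$ comes from; your commutator approach has the merit of being a general template that does not rely on the operator changing particle number by a fixed amount.
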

\begin{proof}
 We only prove the first inequality, the second can be proved in a similar way.
 We have for any $\Psi\in \cF_+$ by the Cauchy-Schwarz inequality
 \begin{equation}
  \Big|\Big\langle \Psi, \sum_{j,k\in \N} M_{j,k} a_j a_k  \Phi \Big\rangle\Big| 
  \leq \|M\|_{\ell^2} \left(\sum_{j,k\in \N} |\langle \Psi, a_j a_k  \Phi \rangle|^2\right)^{1/2}.
 \end{equation}
With \eqref{eq:aaaa} we get the claim.
\end{proof}

\begin{lemma}\label{lem:dom aux}
Assume that $V,W\in L^2(T^d)$ are infinitesimally bounded relative to $-\Delta$. The operator $H_N^\aux$ defined by the expression~\eqref{eq:haux} is self-adjoint on $U_N D(H_N)$ and essentially self-adjoint on $D(H_0)$.
\end{lemma}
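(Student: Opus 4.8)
The plan is to treat $H_N^\aux$ as a bounded perturbation of its kinetic part $H_0^{\leq N}:=-\frac{1}{2m}\Delta_x + d\Gamma^{\leq N}(-\Delta)$ on the truncated space $\cH_+^{\leq N}$, and then to deduce the statement on all of $\cH_+$ from the fact that $\cH_+^{\leq N}$ reduces the (zero-extended) operator. First I would record that, since $V,W$ are infinitesimally $\Delta$-bounded, $H_N$ is self-adjoint on the domain $D(H_N)$ of the free Hamiltonian $-\frac{1}{2m}\Delta_x-\sum_j\Delta_{y_j}$ by Kato--Rellich. Because $U_N$ is unitary and conjugates the free boson kinetic term into $d\Gamma^{\leq N}(-\Delta)$ while leaving the $x$-Laplacian untouched, this gives $U_N D(H_N)=D(H_0^{\leq N})=D(H_0)\cap\cH_+^{\leq N}$, and $H_0^{\leq N}$, being the restriction of the self-adjoint operator $H_0$ to the reducing subspace $\cH_+^{\leq N}$, is self-adjoint on this domain.

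Next I would show that the interaction $I_N:=H_N^\aux-H_0^{\leq N}$, i.e.\ all terms of~\eqref{eq:haux} other than the kinetic ones, is a bounded symmetric operator on $\cH_+^{\leq N}$. Symmetry is manifest, as each term is paired with its adjoint. For boundedness the point is that every term is $\cN_+$-bounded uniformly in $N$: the factors $N^{-1/2}\sqrt{N-\cN_+}$ have norm at most one on $\cH_+^{\leq N}$; the term with $V_{j0k0}$ is $\frac{N-\cN_+}{N}$ times a reordering of $2\,d\Gamma(B)$, where $B$ is the compression to $\fH_+$ of convolution by $V$, which is bounded by Young's inequality since $\|V*f\|_{L^2}\le\|V\|_{L^1}\|f\|_{L^2}\le\|V\|_{L^2}\|f\|_{L^2}$; the quadratic terms with $V_{jk00},V_{00jk}$ are $\cN_+$-bounded by Lemma~\ref{lem:quadratic} together with the estimate $\sum_{j,k}|V_{jk00}|^2\le\|V\|_{L^2}^2$ of~\eqref{eq:V ell2}; and the linear terms with $W_x$ are controlled by $\|a^*(W_x)\Psi\|\le\|W\|_{L^2}\|(\cN_++1)^{1/2}\Psi\|$, uniformly in $x$. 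Since $\cN_+\le N$ on $\cH_+^{\leq N}$, each of these $\cN_+$-bounds becomes a genuine operator bound, so $I_N$ is bounded (with an $N$-dependent constant, which is irrelevant here).

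With these two facts, the Kato--Rellich theorem for bounded perturbations gives that $H_N^\aux=H_0^{\leq N}+I_N$ is self-adjoint on $D(H_0^{\leq N})=U_N D(H_N)$, proving the first assertion. For essential self-adjointness on $D(H_0)$ I would use that, with the convention of extending $H_N^\aux$ by zero, the orthogonal decomposition $\cH_+=\cH_+^{\leq N}\oplus\cH_+^{>N}$ reduces both $H_0$ and $H_N^\aux$, so that $D(H_0)=\big(D(H_0)\cap\cH_+^{\leq N}\big)\oplus\big(D(H_0)\cap\cH_+^{>N}\big)$. On the first summand $H_N^\aux$ restricted to $D(H_0)\cap\cH_+^{\leq N}=D(H_0^{\leq N})$ is already self-adjoint by the first part; on the second summand $H_N^\aux=0$ is essentially self-adjoint on the dense domain $D(H_0)\cap\cH_+^{>N}$. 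The direct sum is therefore essentially self-adjoint on $D(H_0)$.

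The main obstacle is the boundedness of $I_N$ on $\cH_+^{\leq N}$: the delicate part is checking that each interaction term is $\cN_+$-bounded with a constant that does not depend on the cut-off structure (in particular controlling the compressed convolution operator $B$ and invoking the $\ell^2$-summability~\eqref{eq:V ell2} of the Bogoliubov coefficients). Once this is established, the truncation $\cN_+\le N$ and Kato--Rellich make both self-adjointness statements routine.
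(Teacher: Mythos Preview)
Your argument is correct. The core step---showing $I_N=H_N^\aux-H_0^{\leq N}$ is bounded on $\cH_+^{\leq N}$ via uniform $\cN_+$-bounds on each term plus the truncation $\cN_+\le N$---is essentially the same as the paper's. Where you diverge is in obtaining the domain identity $U_N D(H_N)=D(H_0^{\leq N})$. You get it directly by applying Kato--Rellich to $H_N$ \emph{before} conjugation (for fixed $N$, the interactions in~\eqref{eq:hamiltonian-gen} are infinitesimally bounded relative to the free operator), and then transporting the domain by the unitary $U_N$. The paper instead establishes the same identity \emph{after} conjugation: it shows that $U_NH_NU_N^*-H_N^\aux$, i.e.\ the cubic, quartic and $d\Gamma(QW_xQ)$ terms, is infinitesimally $d\Gamma^{\leq N}(-\Delta)$-bounded (using the $\Delta$-infinitesimal boundedness of $V,W$ via~\eqref{eq:dGamma bound} and an argument like~\eqref{eq:quartic bound}), and then invokes Kato--Rellich. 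Your route is more elementary and avoids those estimates here; the paper's route has the advantage that the same cubic/quartic bounds are reused verbatim in the proof of Lemma~\ref{lem:aux}. Your direct-sum argument for essential self-adjointness on $D(H_0)$ simply spells out what the paper compresses into ``by the Kato--Rellich theorem.'' One minor remark: for the $a_j^*a_k$ term you bound the compressed convolution operator via Young; the paper bounds it via the $\ell^2$-estimate $\|V_{j0k0}\|_{\ell^2}\le\|V\|_{L^2}$ (which makes the kernel Hilbert--Schmidt). Both work.
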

\begin{proof}
For the second claim, we prove that $H_N^\aux$ is a perturbation of $-\tfrac{1}{2m}\Delta_x + d\Gamma(-\Delta)^{\leq N}$, the projection of $H_0$ to $\cH_+^{\leq N}$, by a bounded operator.

For the quadratic terms in~\eqref{eq:haux}, this follows from the fact that $H_N^\aux$ acts non-trivially only on $\cH_+^{\leq N}$ and Lemma~\ref{lem:quadratic} together with Parseval's identity, which yields
\begin{align}\label{eq:V ell2}
 \|V_{jk00}\|_{\ell^2(\N\times \N)}^2&= \|V_{00jk}\|_{\ell^2(\N\times \N)}^2 \notag\\
 &  \leq \sum_{j,k\in \N_0} \Big| \int_{T^d } \int_{T^d }  \bar\phi_j(y) \bar\phi_k(y') V(y-y') dy  dy' \Big|^2 \notag\\
 &= \|V(y-y')\|_{L^2(T^d \times T^d)}^2 = \|V\|_{L^2(T^d) }^2, \notag\\
 \|V_{j0k0}\|_{\ell^2(\N\times \N)}^2&\leq \sum_{j,k\in \N_0} \big|\langle \phi_j \otimes \phi_0, V(y_1-y_2) \phi_0 \otimes \phi_k \rangle\big|^2 =  \|V\|_{L^2(T^d) }^2.
\end{align}
For the linear term in~\eqref{eq:haux} this follows from the bound $\|a(W_x)\Psi\|_{\cH_+}\leq \|W\|_{L^2} \|\sqrt{\cN_+} \Psi \|_{\cH_+}$  by the same reasoning. Hence, $H^\aux_N$ is self-adjoint on the domain of $-\tfrac{1}{2m}\Delta_x + d\Gamma(-\Delta)^{\leq N}$ and essentially self-adjoint on $D(H_0)$ by the Kato-Rellich theorem.  

To obtain the first claim it is now sufficient to prove that the difference of $H_N^\aux$ and $U_N H_N U_N^*$ is bounded relative to $d\Gamma(-\Delta)^{\leq N}$, with relative bound zero.
This difference consists of cubic~\eqref{eq:H diff cubic} and quartic~\eqref{eq:H diff quartic} terms involving $V$, and a quadratic term~\eqref{eq:H diffW} with $W$. 
The cubic terms are bounded by an argument analogous to~\eqref{eq:V cubic bound}. 
The relative bound for $N^{-1/2}d\Gamma(QW_xQ)^{\leq N}$ (i.e.~\eqref{eq:H diffW}) is a consequence of the bound for $\Psi \in D(d\Gamma(-\Delta))$, $\Phi \in \cH_+^{\leq N}$, and $\lambda\geq0$ (cf.~\cite[Eqs.2.9, 2.10]{moller2005})
\begin{align}
 |\langle  \Phi, N^{-1/2} d\Gamma(QW_xQ)^{\leq N} \Psi\rangle| 
 &\leq N^{-1/2}\| \cN_+^{1/2} \Phi\|_{\cH_+^{\leq N}}   \|d\Gamma((QW_xQ)^2)^{1/2}\Psi \|_{\cH_+^{\leq N}} \notag \\
 &\leq \eps \|\Phi\|_{\cH_+^{\leq N}} \|d\Gamma((\lambda-\Delta)^2)^{1/2}\Psi \|_{\cH_+^{\leq N}}  \notag \\
 &\leq  \eps \|\Phi\|_{\cH_+^{\leq N}}\|d\Gamma(\lambda-\Delta)\Psi \|_{\cH_+^{\leq N}}, \label{eq:dGamma bound}
\end{align}
with $\eps=\|(\lambda-\Delta)^{-1}W_x\|_{L^2 \to L^2}=\|W_x(\lambda-\Delta)^{-1}\|_{L^2 \to L^2}$.
Since $\eps$ goes to zero as $\lambda\to \infty$, because $W$ is infinitesimally $-\Delta$-bounded, this gives an infinitesimal bound.
The quartic term is $d\Gamma(-\Delta)^{\leq N}$-bounded by the argument that gives~\eqref{eq:quartic bound}, but replacing $-\Delta$ by $\lambda-\Delta$ and arguing as above.
\end{proof}

\begin{proposition}\label{prop:dom BF}
 Let $V, W\in L^2(T^d)$, then $H^\BF-H_0$ is $\cN_+$-bounded and $H^\BF $ is essentially self-adjoint on $D(H_0)$. 
\end{proposition}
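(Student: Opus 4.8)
The plan is to prove the two assertions in turn: the $\cN_+$-bound by a direct estimate of each term, and essential self-adjointness by first diagonalising the quadratic (Bogoliubov) part and then treating the linear coupling as a Kato--Rellich perturbation.

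For the $\cN_+$-bound I would split $H^\BF-H_0$ into the quadratic terms in $V$ and the linear terms in $W$. The number-preserving term $2\sum_{j,k}V_{j0k0}a_j^*a_k$ equals $2\,d\Gamma(QV{*}Q)$, i.e. the second quantisation of convolution by $V$, which is a bounded one-particle operator of norm $\le\|V\|_{L^2}$, so this term is bounded by $2\|V\|_{L^2}\|\cN_+\psi\|$. The pairing terms $\sum V_{jk00}a_j^*a_k^*$ and $\sum V_{00jk}a_ja_k$ are controlled by Lemma~\ref{lem:quadratic} together with $\|V_{jk00}\|_{\ell^2}\le\|V\|_{L^2}$ from~\eqref{eq:V ell2}, giving a bound by $\|V\|_{L^2}\|(\cN_++2)\psi\|$. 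Finally the standard estimates $\|a^*(W_x)\psi\|,\|a(W_x)\psi\|\le\|W\|_{L^2}\|(\cN_++1)^{1/2}\psi\|$ hold uniformly in $x$, hence as operators on $\cH_+$. Summing these contributions yields $\|(H^\BF-H_0)\psi\|\le a\|\cN_+\psi\|+b\|\psi\|$, which is the first claim.

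For essential self-adjointness the naive idea — apply Kato--Rellich around $H_0$, using the gap $d\Gamma(-\Delta)\ge 4\pi^2\cN_+$ to convert the $\cN_+$-bound into an $H_0$-bound — does not work directly, since the pairing terms only produce a relative bound of order $\|V\|_{L^2}/4\pi^2$, which need not be below $1$. I would therefore first eliminate the pairing by the Bogoliubov transformation $U_\Bog$ of the Remark. Because the pairing kernel is Hilbert--Schmidt ($V_{jk00}\in\ell^2$ by~\eqref{eq:V ell2}) and $-\Delta$ has a spectral gap on $\fH_+$, so that $(-\Delta)^{-1/2}$ is bounded there, the transformation is Hilbert--Schmidt-implementable and unitary. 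Conjugating $-\tfrac1{2m}\Delta_x+H^\Bog$ by $1\otimes U_\Bog$ then yields, in the semibounded case, the operator $-\tfrac1{2m}\Delta_x+d\Gamma(E)+E_0$ of~\eqref{eq:HBF trafo}, which is self-adjoint on $D(H_0)$ because $U_\Bog$ maps $D(d\Gamma(-\Delta))$ onto $D(d\Gamma(E))$. In the same representation the interaction becomes $a^*(\tilde W_x)+a(\tilde W_x)$ with $\tilde W\in L^2$, and by the $(\cN_++1)^{1/2}$-estimate together with the gap of $E$ this is infinitesimally bounded relative to $d\Gamma(E)$; Kato--Rellich then gives essential self-adjointness of the transformed $H^\BF$ on $D(H_0)$, and conjugating back by $1\otimes U_\Bog^*$ proves the statement.

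The main obstacle is precisely the self-adjointness of the quadratic Bogoliubov Hamiltonian for arbitrary $V\in L^2$. Unlike the linear coupling, the pairing is not form-small relative to $H_0$, and $H^\Bog$ may in fact fail to be bounded below: for modes with $\hat V(k)<-\lambda_k/4$ the associated two-mode block is of inverted-oscillator type, so no perturbative argument around $H_0$ is available and the clean reduction to a positive $d\Gamma(E)$ breaks down. The essential input is thus the theory of quadratic bosonic Hamiltonians, where the Hilbert--Schmidt condition furnished by the gap of $-\Delta$ on $\fH_+$ guarantees that $U_\Bog$ exists and remains implementable even in the unbounded-below case (bringing $H^\Bog$ to a symplectic normal form rather than to $d\Gamma(E)$). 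Verifying that this diagonalisation preserves $D(H_0)$ and keeps the transformed form factor in $L^2$, so that the Kato--Rellich step in the previous paragraph still applies, is the delicate part of the argument.
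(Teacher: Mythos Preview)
Your argument for the $\cN_+$-bound is correct and essentially the same as the paper's.

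For essential self-adjointness, you correctly observe that a naive Kato--Rellich argument around $H_0$ fails because the relative bound of the pairing terms is not small. However, your proposed fix via Bogoliubov diagonalisation is both more elaborate than necessary and, as you yourself concede, incomplete for arbitrary $V\in L^2$: without a sign or smallness assumption, $H^\Bog$ need not be bounded below, the reduction to $d\Gamma(E)$ may fail, and you are left appealing to a general ``theory of quadratic bosonic Hamiltonians'' whose hypotheses you do not verify. The final paragraph is an honest acknowledgement that the proof is not finished.

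The paper avoids all of this with a two-line trick. Since $H^\BF-H_0$ is $\cN_+$-bounded (part one), it is $(H_0+c\cN_+)$-bounded with relative bound tending to zero as $c\to\infty$; hence for $c$ large enough Kato--Rellich gives that $H^\BF+c\cN_+$ is self-adjoint and bounded below on $D(H_0)$. One then applies Nelson's commutator theorem~\cite[Thm.~X.36]{ReSi2} with $H^\BF+c\cN_+$ (shifted to be $\geq 1$) as the comparison operator: it suffices to check
\[
|\langle\Phi,[\cN_+,H^\BF]\Phi\rangle|\le C\langle\Phi,\cN_+\Phi\rangle.
\]
But $[\cN_+,H^\BF]$ consists precisely of the non-number-preserving pieces of $H^\BF-H_0$ with altered signs, so this follows from the very same estimates (Lemma~\ref{lem:quadratic} and~\eqref{eq:V ell2}) used in part one. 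No spectral analysis of $H^\Bog$ is needed, and the argument works uniformly for all $V\in L^2$.
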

\begin{proof}
The first statement follows from Lemma~\ref{lem:quadratic} as above.
Consequently, there exists a constant $c$ such that $H^\BF+ c\cN_+$ is self-adjoint and positive on $D(H_0)=D(H_0+c\cN_+)$ by the Kato-Rellich theorem, since $H^\BF-H_0$ is $H_0 + c\cN_+$-bounded with bound less than one.

Essential self-adjointness of $H^\BF$ can now be obtained by applying the commutator theorem~\cite[Thm.X.36]{ReSi2}, with  $H^\BF + c\cN_+$ as a comparison operator. For this, it is sufficient to prove that
\begin{equation*}
 |\langle \Phi, [\cN_+, H^\BF ] \Phi\rangle | = |\langle \Phi, [\cN_+, H^\BF-H_0 ] \Phi\rangle | \leq C \langle \Phi, \cN_+ \Phi \rangle,
\end{equation*}
for some constant $C>0$ and all $\Phi \in D(H_0^{1/2})$.
Since the commutator of $\cN_+$ with $H^\Bog$ is again a quadratic operator, composed of the same terms up to signs, this follows from Lemma~\ref{lem:quadratic} as above.
\end{proof}

%\renewcommand{\bibname}{References}
% \bibliographystyle{abbrv}
% \bibliography{bib_file}

\end{document}